\documentclass[preprint,12pt,authoryear]{elsarticle}

%----------------------------------------------------------------------
%% Basic math input
%----------------------------------------------------------------------
\usepackage{amsmath}
\usepackage{amssymb}
\usepackage{amsfonts}
\usepackage{amsthm}
\usepackage{mathtools}
\usepackage{enumerate}
\usepackage{comment}
\usepackage{braket}
%\usepackage{dsfont}

%----------------------------------------------------------------------
%% Fonts and alphabets (beware of conflicts)
%----------------------------------------------------------------------
\usepackage[utf8]{inputenc}
\usepackage[T1]{fontenc}

%% Blackboard bold
%----------------------------------------------------------------------
\usepackage{dsfont}
%\let\mathbb=\mathds

%% Calligraphic font
%----------------------------------------------------------------------
\usepackage{mathrsfs}
%\usepackage[mathscr]{eucal}

%% Math alphabets
%----------------------------------------------------------------------
\usepackage[%
cal=cm,
%bb=fourier,
%scr=euler,
%frak=euler
]
{mathalfa}

%----------------------------------------------------------------------
%% Accents
%----------------------------------------------------------------------
\usepackage{accents}

%----------------------------------------------------------------------
%% Colors
%----------------------------------------------------------------------
\usepackage[dvipsnames,svgnames]{xcolor}
\colorlet{MyBlue}{DodgerBlue!60!Black}
\colorlet{MyGreen}{DarkGreen!85!Black}

%----------------------------------------------------------------------
%% Document layout
%----------------------------------------------------------------------
%\setlength{\textwidth}{\paperwidth}
%\addtolength{\textwidth}{-2in}
%\calclayout
\usepackage{fullpage}

%----------------------------------------------------------------------
%% Figures and Graphics
%----------------------------------------------------------------------
%\usepackage[font=small,labelfont=bf]{caption}
%\captionsetup[algorithm]{labelfont={bf,sf,normalsize},font={small},labelsep=period}
\usepackage{subfigure}
\usepackage{tikz}
\usetikzlibrary{calc,patterns}
\usepackage{changepage}
%\usepackage{caption}
%\usepackage{subcaption}

%----------------------------------------------------------------------
%% Miscellaneous
%----------------------------------------------------------------------
\usepackage{acronym}
\usepackage{latexsym}
\usepackage{longtable}
\usepackage{wasysym}
\usepackage{xspace}
\usepackage[normalem]{ulem}

%----------------------------------------------------------------------
%% Hyperlinks
%----------------------------------------------------------------------
\usepackage{hyperref}
\hypersetup{
colorlinks=true,
linktocpage=true,
%pdfstartpage=1,
pdfstartview=FitH,
breaklinks=true,
pdfpagemode=UseNone,
pageanchor=true,
pdfpagemode=UseOutlines,
plainpages=false,
bookmarksnumbered,
bookmarksopen=false,
bookmarksopenlevel=1,
hypertexnames=true,
pdfhighlight=/O,
%hyperfootnotes=true,
%nesting=true,
%frenchlinks,
urlcolor=MyBlue!60!black,linkcolor=MyBlue!70!black,citecolor=DarkGreen!70!black, % <--- for screen
%urlcolor=black, linkcolor=black, citecolor=black, %pagecolor=black, % <--- for printing
%pagecolor=RoyalBlue,
pdftitle={},
pdfauthor={},
pdfsubject={},
pdfkeywords={},
pdfcreator={pdfLaTeX},
pdfproducer={LaTeX with hyperref}
}

% Hyperlink shortcuts

%\def\URL#1{\href{#1}{#1}}

%----------------------------------------------------------------------
%% Cleverefs
%----------------------------------------------------------------------
\numberwithin{equation}{section}  %\numberwithin{theorem}{section}
%numberwithin goes before cleverefs when using hyperref
\usepackage[sort&compress,capitalize,nameinlink]{cleveref}
%\crefname{example}{Ex.}{Exs.}
\crefname{app}{Appendix}{Appendices}

\crefrangeformat{equation}{\upshape(#3#1#4)\textendash(#5#2#6)}

%*************************************************************
%*****    EDITING
%*************************************************************
\usepackage[textwidth=30mm]{todonotes}

%\setlength{\marginparwidth}{2cm}
%\usepackage{soul}
%\setstcolor{red}
%\sethlcolor{SkyBlue}

%\newcommand{\debug}[1]{{\color{purple}#1}}
\newcommand{\debug}[1]{#1}

%----------------------------------------------------------------------
%% Theorem-like
%----------------------------------------------------------------------
\theoremstyle{plain}
\newtheorem{theorem}{Theorem}

\newtheorem*{corollary*}{Corollary}
\newtheorem{lemma}[theorem]{Lemma}
\newtheorem{proposition}[theorem]{Proposition}

%----------------------------------------------------------------------
%% Definition-like
%----------------------------------------------------------------------
\theoremstyle{definition}
\newtheorem{definition}[theorem]{Definition}
\newtheorem*{definition*}{Definition}

\newtheorem*{hypothesis*}{Hypothesis}

%----------------------------------------------------------------------
%% Remark-like
%----------------------------------------------------------------------
\theoremstyle{remark}
\newtheorem{remark}[theorem]{Remark}
\newtheorem*{remark*}{Remark}
\newtheorem*{notation*}{Notational remark}
\newtheorem{example}[theorem]{Example}

%*************************************************************
%*****    FREQUENTLY USED
%*************************************************************

%----------------------------------------------------------------------
%% Linear Algebra
%----------------------------------------------------------------------

\newcommand{\canb}{\boldsymbol{\debug e}}

\newcommand{\canbres}{\debug \eta}
\newcommand{\canbresprof}{\boldsymbol{\canbres}}

%----------------------------------------------------------------------
%% Probability
%----------------------------------------------------------------------

\DeclareMathOperator{\Prob}{\mathsf{\debug{P}}}
\DeclareMathOperator{\Expect}{\mathsf{\debug{E}}}

%----------------------------------------------------------------------
%% Games
%----------------------------------------------------------------------

\newcommand{\cost}{\debug c}
\newcommand{\costprof}{\boldsymbol{\cost}}
\newcommand{\Cost}{\debug C}
\newcommand{\costclass}{\mathcal{\debug \Cost}}

\newcommand{\eqcost}{\debug \lambda}
\newcommand{\eqcostprof}{\boldsymbol{\eqcost}}

\newcommand{\game}{\mathcal{\debug G}}

\newcommand{\eq}[1]{\widehat#1}

\newcommand{\activ}[1]{\widehat#1}

\DeclareMathOperator{\SC}{\mathsf{\debug{SC}}}

%----------------------------------------------------------------------
%% Network
%----------------------------------------------------------------------
\newcommand{\graph}{{\debug G}}
\newcommand{\vertices}{\mathcal{\debug V}}
\newcommand{\edges}{\mathcal{\debug E}}

\newcommand{\vertex}{\debug v}

\newcommand{\edge}{\debug e}

\newcommand{\source}{\mathsf{\debug O}}

\newcommand{\sink}{\mathsf{\debug D}}

\newcommand{\rate}{\debug \mu}
\newcommand{\ratealt}{\debug \nu}
\newcommand{\rateprof}{\boldsymbol{\rate}}
\newcommand{\rateprofalt}{\rateprof'}

\newcommand{\flow}{\debug f}
\newcommand{\flows}{\mathcal{\debug F}}
\newcommand{\flowprof}{\boldsymbol{\flow}}

\newcommand{\load}{\debug x}

\newcommand{\loads}{\mathcal{\debug X}}
\newcommand{\loadprof}{\boldsymbol{\load}}

\newcommand{\nRoutes}{\debug P}
\newcommand{\routes}{\mathcal{\debug \nRoutes}}
\newcommand{\route}{\debug p}

%\newcommand{\nodeA}{\debug A}
%\newcommand{\nodeB}{\debug B}
%\newcommand{\nodeF}{\debug F}

%----------------------------------------------------------------------
%% Sundries
%----------------------------------------------------------------------
\newcommand{\argdot}{\,\cdot\,}

\newcommand{\diff}{\ \textup{d}}

\newcommand{\ie}{i.e.,\ }
\newcommand{\eg}{e.g.,\ }

\newcommand{\zerovec}{\boldsymbol{\debug 0}}

%----------------------------------------------------------------------
%% This paper
%----------------------------------------------------------------------

%\newcommand{\optbound}{\debug B}

%\newcommand{\indplayedge}{\debug W}
%\newcommand{\indactive}{\debug \Xi}
%\newcommand{\rvberp}{\debug Y}
%\newcommand{\rvberq}{\debug Z}

\newcommand{\perturb}{\debug \varphi}

\newcommand{\regime}{\debug \varrho}
\newcommand{\regimeprof}{\boldsymbol{\regime}}

\newcommand{\valueW}{\debug V}
\newcommand{\vinf}{\debug v}

\newcommand{\eqcostedge}{\debug \tau}

\newcommand{\eqcostedgeprof}{\boldsymbol{\eqcostedge}}

\newcommand{\irun}{\debug i}

\newcommand{\zvar}{\debug z}
\newcommand{\zvarprof}{\boldsymbol{\zvar}}

\newcommand{\Rpos}{\mathbb R_+}

\newcommand{\region}{\debug \Gamma}

\newcommand{\var}{\debug t}

\newcommand{\commodities}{\debug{\mathcal H}}
\newcommand{\commodity}{\debug h}
\newcommand{\commodityalt}{\commodity'}
\newcommand{\resources}{\debug{\mathcal R}}
\newcommand{\resource}{\debug r}
\newcommand{\resourcealt}{\debug {\resource'}}

\newcommand{\strategies}{\debug{\mathcal S}}
\newcommand{\strategy}{\debug s}
\newcommand{\strategyalt}{\debug {\strategy'}}
\newcommand{\strategiesprof}{\boldsymbol{\strategies}}
\newcommand{\routesprof}{\boldsymbol{\routes}}
\newcommand{\resourcesprof}{\boldsymbol{\resources}}

\newcommand{\prim}{\mathsf{\debug P}}
\newcommand{\dual}{\mathsf{\debug D}}
\newcommand{\solset}{\mathsf{\debug S}}

\newcommand{\comonf}{\debug \psi}

       % Top strut
 % Bottom strut
 % Bottom strut

\DeclareMathOperator{\co}{\debug{co}}

\newcommand{\eqivgame}[1]{\breve#1}

%----------------------------------------------------------------------
%% Fields
%----------------------------------------------------------------------

%\newcommand{\C}{\mathbb{C}}
\newcommand{\R}{\mathbb{R}}

\newcommand{\reals}{\mathbb{R}}

%----------------------------------------------------------------------
%% Operators
%----------------------------------------------------------------------

%\DeclareMathOperator{\ind}{ind}

%\DeclareMathOperator{\var}{var}

%----------------------------------------------------------------------
%% Delimiters
%----------------------------------------------------------------------
 % provides an empty command for the delimiters below

\DeclarePairedDelimiter{\braces}{\{}{\}}

\DeclarePairedDelimiter{\parens}{(}{)}

\DeclarePairedDelimiter{\abs}{\lvert}{\rvert}

\DeclarePairedDelimiterX{\inner}[2]{\langle}{\rangle}{#1,#2}
\DeclarePairedDelimiterX{\setdef}[2]{\{}{\}}{#1:#2}

\DeclarePairedDelimiterXPP{\probof}[1]{\Prob}{(}{)}{}{%

#1}

\DeclarePairedDelimiterXPP{\exof}[1]{\Expect}{[}{]}{}{%

#1}

%----------------------------------------------------------------------
%% Roberto
%----------------------------------------------------------------------

%----------------------------------------------------------------------
%% Valerio
%----------------------------------------------------------------------

%----------------------------------------------------------------------
%% Marco
%----------------------------------------------------------------------

%----------------------------------------------------------------------
%%% ACRONYMS
%----------------------------------------------------------------------

\newacro{ACG}{atomic congestion game}
\newacro{ACGSD}{atomic congestion game with stochastic demand}
\newacro{CRG}{constrained routing game}
\newacro{PoA}{price of anarchy}
\newacro{PoS}{price of stability}
\newacro{SC}{social cost}
\newacro{SEC}{social expected cost}
\newacro{SO}{social optimum}
\newacro{SOC}{socially optimum cost}

\newacro{PNMC}{parallel-network with multiple-commodities}
\newacro{MES}{monotonic equilibrium selection}

\newacro{NE}{Nash equilibrium}
\newacroplural{NE}[NE]{Nash equilibria}
\newacro{BNE}{Bayesian Nash equilibrium}
\newacroplural{BNE}[BNE]{Bayesian Nash equilibria}
\newacro{PNE}{pure Nash equilibrium}
\newacroplural{PNE}[PNE]{pure Nash equilibria}

\newacro{WE}{Wardrop equilibrium}
\newacroplural{WE}[WE]{Wardrop equilibria}

\newacro{KKT}{Karush\textendash Kuhn\textendash Tucker}
\newacro{OD}[OD]{origin-destination}
\newacro{BPR}{Bureau of Public Roads}

\newacro{SP}{series-parallel}
\newacro{CSP}{constrained series-parallel}

%%%%%%%%%%%%%%%%%%%%%%%%%%%%%%%%%%
\journal{arXiv}

\begin{document}

\begin{frontmatter}

\title{Monotonicity of Equilibria in Nonatomic  Congestion Games}

\author[labelRoberto]{Roberto Cominetti}

\affiliation[labelRoberto]{organization={Facultad de Ingenier{\'\i}a y Ciencias, Universidad Adolfo Ib\'a\~nez},%Department and Organization
            addressline={Diagonal las Torres 2640}, 
            city={Pe{\~n}alol{\'e}n},
            postcode={7910000}, 
            state={Región Metropolitana},
            country={Chile}}
            
\author[labelValerio]{Valerio Dose}

\affiliation[labelValerio]{organization={Dipartimento di Ingegneria Informatica, Automatica e Gestionale, ``Sapienza'' Universit\`a  di Roma},%Department and Organization
            addressline={Via Ariosto 25}, 
            city={Roma},
            postcode={00185}, 
%            state={},
            country={Italy}}

\author[labelMarco]{Marco Scarsini}

\affiliation[labelMarco]{organization={Dipartimento di Economia e Finanza, Luiss University},%Department and Organization
            addressline={Viale Romania 32}, 
            city={Roma},
            postcode={00197}, 
%            state={},
            country={Italy}}

\begin{abstract}
This paper studies the monotonicity of  equilibrium costs and equilibrium  loads in nonatomic congestion games, in response to variations of the demands. 
The main goal is to identify conditions under which a paradoxical non-monotone behavior can be excluded. 
In contrast with routing games with a single commodity, where the network topology is the sole determinant factor for monotonicity, for general congestion games with multiple commodities the structure of the strategy sets plays a crucial role.

We frame our study in the general setting of congestion games, with a special focus on
singleton congestion games, for which we establish the monotonicity of equilibrium loads with respect to every demand. 
We then provide conditions for comonotonicity of the equilibrium loads, \ie we investigate when they jointly increase or decrease after variations of the demands. 
We finally extend our study from singleton congestion games to the larger class of \acl{CSP} congestion games, whose structure is reminiscent of the concept of a \acl{SP} network.
\end{abstract}

\begin{keyword}
game theory \sep comonotonicity \sep singleton congestion games \sep Wardrop equilibrium

\MSC[2020] 91A14 \sep 91A07 \sep 91A43
\end{keyword}

\end{frontmatter}

%% main text

\section{Introduction}

Decision making in a multi-agent strategic context is prone to various paradoxes that are impossible in a single-agent framework.
For instance, expanding the feasible choice set produces a 
better outcome in single-agent optimization, but, in a game, it may give rise to an equilibrium that is worse for all players. 
Analogously, more information is beneficial in single-agent decision making under risk, but may induce worse Bayes-Nash equilibria in a game. 

Several paradoxes arise in routing games. 
These games represent situations where roads to go from one origin to the corresponding destination are chosen strategically by travellers in a way that minimizes their traveling time. 
The nonatomic version of these games is a good approximation of situations with a large number of travelers. 
In nonatomic games the standard equilibrium concept, due to \citet{War:PICE1952}, prescribes that, for each \ac{OD} pair, only the paths with the smallest traveling time are used and they all have the same traveling time. 
A famous paradox in  routing games, due to \citep{Bra:U1968,BraNagWak:TS2005}, shows that adding an edge to a network can make the traveling time worse for all players.
Other paradoxes arise in this class of games. 
For instance, although one could expect that an increase in traffic demand would make the traveling time higher across the network, this is not always the case. In fact, while \citet{Hal:TS1978} proved that---\emph{ceteris paribus}---an increase in the demand of one \ac{OD} pair increases the traveling time of this \ac{OD},  \citet{Fis:TRB1979} showed that an increase of traffic demand of one \ac{OD} pair can be beneficial for some other \ac{OD} pair by decreasing its traveling time.
Even in networks with a single \ac{OD} pair, an increment in the traffic demand may decrease the equilibrium load on some edges in the network. 
These paradoxes will be examined in detail in \cref{ex:Fisk,ex:Wheatstone}.

Networks in which the equilibrium loads of all the edges increase with the travel demand of every \ac{OD} pair are more predictable and easier to handle for a social planner, because an edge is never used below a certain level of demand and is always used above that level.
The goal of this paper is precisely to understand when the equilibrium travel times and edge loads are monotone in the  demand, so that the paradoxical phenomena observed in the above  examples cannot happen. 
Rather than focusing on routing games, we will state our results for the wider class of congestion games, of which routing games are a
significant but particular example.

\subsection{Our Results}
\label{susc:our-results}

Nonatomic congestion games are defined by a finite set of resources and a finite set of commodities.
Each commodity has a demand that can be satisfied by different strategies in a strategy set, where each strategy is a subset of the resource set.
In a Wardrop equilibrium each resource has a nonnegative load (a fraction of the total  demand), which varies with the demand vector. 

The first part of our paper (\cref{sc:singleton}) focuses on singleton congestion games, in which every strategy contains only one resource. 
We start by proving an equilibrium selection result for this class of games: \cref{thm:singleton-congestion-games} shows that, even when there exist multiple equilibrium flows, one can always select one equilibrium whose corresponding resource loads are monotone increasing with respect to each demand.

We then use the notion of comonotonicity, which captures the idea that different resource loads jointly increase or decrease upon variations of the demands.
\cref{thm:parallel-regions} provides some structural results about the demand regions where different subsets of resources are used in equilibrium and how these resources become active or inactive as the demands vary.
This analysis allows us to identify regions of the space of demands where the equilibrium loads are comonotonic.

The following section is devoted to games that are more general than singleton congestion games.
\cref{prop:congestion-to-routing} shows that every congestion game can be suitably represented as a routing game that is subject to some restrictions, \ie not every path from an origin to a destination is feasible. 
Then \cref{pr:constrained-series-parallel} extends the monotonicity properties of \cref{sc:singleton} to a class of games that is obtained from singleton congestion games by  applying the series and parallel operations.
Finally \cref{pr:constrained-series-parallel-routing-description} relates \acl{CSP} games to routing games.
These results shed light on the features that produce the non-monotonicity paradoxes, and highlights the difference between the single- and multiple-\ac{OD} networks: for  routing games with a single \ac{OD} pair, the network topology is the sole relevant factor that guarantees the monotonicity of equilibrium loads, whereas  for multiple \acp{OD} the structure of the set of feasible routes  plays a crucial role.

\subsection{Related Work}
\label{susc:related-work}

Several authors studied the sensitivity of Wardrop equilibria in  routing games with respect to changes in the demand. 
\citet{Hal:TS1978} observed that, when the costs are strictly increasing, the equilibrium loads depend continuously on the demands. 
\citet{Pat:TS2004} and \citet{JosPat:TRB2007} studied the directional differentiability (or lack thereof) of equilibrium costs and loads, whereas  \citet{ComDosSca:arXiv2023} studied differentiability along a curve in the space of demands. 
Specific cases of differentiability, were also considered in \citet{Pra:thesis2014}. 

As mentioned previously, \citet{Hal:TS1978} proved that the equilibrium cost of an \ac{OD} pair increases when the demand of that \ac{OD} pair grows.
Some positive results concerning the monotonicity of equilibrium loads in \acl{SP} single-commodity networks can be found in \citet{KliWar:MOR2022} for piece-wise linear costs and in \citet{ComDosSca:MP2021} for general nondecreasing costs.

Traffic equilibria in routing games exhibit a multitude of paradoxes. 
The most famous, due to 
\citet{Bra:U1968},
shows that removing an edge from a network could actually improve the equilibrium cost for all players
(see \cref{fig:classic_braess}). 
Also surprising is the fact observed by \citet{Fis:TRB1979} that an \ac{OD} can reduce its cost and benefit from an increase in the demand of a different \ac{OD}, even after doubling all the demands.
\citet{FisPal:TRA1981} showed that such paradoxical phenomena could be observed in real life in the City of Winnipeg, Manitoba, Canada.
\citet{DafNag:TRB1984} studied how equilibrium costs are affected by changes in the travel demand or addition of new routes under a more general 
non-separable cost structure. 
A related paradoxical phenomenon was studied by \citet{MehHor:IEEETCNS2020} in a model with both regular and autonomous vehicles: 
despite the fact that autonomous vehicles are more efficient by allowing shorter headways and distances, 
replacing regular with autonomous vehicles may increase the total network delay.

A particularly simple class of congestion games is the one of singleton congestion games where each strategy comprises a single resource. Different variants of these type of games have been considered in the literature,
including atomic weighted and unweighted players, with splittable or unsplittable loads, as well as nonatomic games. 

For \emph{atomic splittable} singleton games, 
\citet{HarTim:LNCS2017}
developed a polynomial time algorithm to compute a Nash equilibrium with player-specific affine costs. 
In a different direction, \citet{BilVin:ESA2017}
investigated how the structure of the players’ strategy sets affects the efficiency in singleton load balancing games. 
Atomic splittable singleton games have also been used to model the charging strategies of a population of electric vehicles \citep{MaCalHis:IEEETCST2013,DeoMarPra:IFACPOL2017,NimMedRatShaSmiHal:IEEETITS2020}.
In a related but different direction, \citet{CasMarGatCon:AI2019} studied the computational complexity of finding Stackelberg equilibria in games where one player acts as leader and the others as followers.

For \emph{atomic unsplittable} singleton  games,  
\citet{GaiSch:WINE2007} provided upper and lower bounds on the price of anarchy, distinguishing between restricted and unrestricted strategy sets, weighted and unweighted players, and linear vs. polynomial costs. 
\citet{FotKonKouMavSpi:TCS2009} studied the combinatorial structure and computational complexity of Nash equilibria,  including the problems of deciding the existence of pure  equilibria, computing pure/mixed equilibria, and computing the social cost of a given mixed equilibrium. 
\citet{GaiLucMavMon:TCS2010} studied weighted atomic unsplittable routing games on a parallel-edge network where each user can only route over a restricted set of edges. 
They developed a polynomial time algorithm for the model where the edge costs are identical and linear, and  both player weights, and edge capacities are integer.    
\citet{HarKli:MOR2012} characterized the classes of cost functions  that guarantee the existence of pure equilibria for weighted  routing games and singleton congestion games.

Finally, in the \emph{nonatomic} setting, which is the focus of our paper, \citet{GonTen:EC2016} used a clever hydraulic system representation to study asymmetric singleton congestion games, presenting applications in the home internet and cellular markets, as well as in cloud computing. 
Another recent application of nonatomic singleton congestion games to hospital choice in healthcare systems is discussed in \citet{vdK:arXiv2023}.
In the special case of routing games, singleton games correspond to parallel networks. Despite its simple topology they are nevertheless of interest in the literature \citep[see, \eg][]{AceOzd:MOR2007,Wan:EJOR2016,HarSchVer:EJOR2019}).
\citet{FujGoeHarPeiZen:MOR2017} considered nonatomic congestion games and used matroid theory to characterize games for which two forms of Braess's paradox cannot occur.
A similar problem was considered by \citet{VerChe:arXiv2023}, who---among other things---study the effect of Braess's paradox at different levels of the demand in single-\ac{OD} routing games with affine costs.

In \cref{sc:parallel-regimes} we use the concept of comonotonicity.
Although its definition is purely analytic and concerns real functions defined on an arbitrary space, the idea originated in various applications in  actuarial science \citep{Bor:E1962}, economic theory \citep{Wil:E1968,Arr:NH1970}, and decision theory \citep{Yaa:E1987,Sch:E1989}. 
A mathematical treatment of the concept---in connection with Choquet capacities---can be found in  \citet{Del:SPUS1970}, who uses the term ``même tableau de variation'' and \citet{Sch:PAMS1986}, who---to the best of our knowledge---was the first to use the term comonotonic in his preprint \citet{Sch:mimeo1984} (there exists a previous version, \citet{Sch:mimeo1982}, which we could not access, so we don't know whether the term was used there or not).
A recent application of comonotonicity to game theory can be found in \citet{KocRujKuh:MP2022}.
The reader is referred to \citet{DhaDenGooKaaVyn:IME2002,PucSca:JMVA2010} for a more  thorough discussion and further  references.

\subsection{Organization of the paper}
\label{susc:organization}

The paper is organized as follows.
\cref{sc:prelim} recalls the standard model of non-atomic congestion games and  reviews the basic properties of equilibria. 
This section includes the definition of \acl{MES} and  comonotonicity. 
\cref{sc:singleton,sc:parallel-regimes} both deal with singleton congestion games. 
\cref{sc:singleton} contains the central monotonicity result, whereas \cref{sc:parallel-regimes} discusses comonotonicity and the structure of the domains associated to different sets of resources.
\cref{sc:beyond-singleton} studies the monotonicity properties of more complex congestion games beyond the case of singleton strategies.
\cref{sc:summary} summarizes the results of our paper and proposes some open problems.
\cref{sc:appendix:proofs} includes some supplementary proofs.
\cref{sc:list-of-symbols} contains a list of the symbols used throughout the paper.

%
% Section --------------------------------------
%

\section{Congestion Games and Equilibria}
\label{sc:prelim}

In this section we recall the basic concepts and properties of nonatomic congestion games, and we fix the notations used throughout the paper. 
The basic structural elements are:     
\begin{itemize}
\item 
a finite set $\resources$ of \emph{resources} and, for each $\resource\in\resources$, a continuous nondecreasing \emph{cost function} $\cost_{\resource} \colon \reals_{+} \to \reals_{+}$, where $\cost_{\resource}(\load_{\resource})$ represents the cost of resource $\resource$ under a workload $\load_{\resource}$; and

\item 
a finite set $\commodities$ of \emph{commodities} and, for each $\commodity\in\commodities$,  a family $\strategies^{\commodity}\subset2^\resources\setminus\varnothing$ of \emph{feasible strategies}, where every  $\strategy\in\strategies^{\commodity}$ is a nonempty subset  of resources $\strategy\subset\resources$.
\end{itemize}
These elements define a  \emph{congestion game structure} $\game=\parens*{\resources,\costprof,\strategiesprof}$
with $\costprof \coloneqq \parens*{\cost_{\resource}}_{\resource\in\resources}$ the vector of cost functions and 
$\strategiesprof \coloneqq \times_{\commodity\in\commodities}{\strategies^{\commodity}}$ the set of strategy profiles.

Every vector $\rateprof \coloneqq \parens*{\rate^{\commodity}}_{\commodity\in\commodities}$ of \emph{demands} $\rate^{\commodity} \ge 0$, determines a \emph{nonatomic congestion game} $(\game,\rateprof)$ as follows.
For each commodity $\commodity\in\commodities$, a \emph{feasible flow} is a vector $\flowprof^{\commodity} \coloneqq \parens*{\flow_{\strategy}^{\commodity}}_{\strategy\in\strategies^{\commodity}}$ satisfying  
\begin{equation}
\label{eq:feasible-flows-per-commodity}    
\rate^\commodity = \sum_{\strategy\in\strategies^\commodity} \flow_{\strategy}^{\commodity}, \quad \flow_{\strategy}^{\commodity} \ge 0, \text{ for all } \strategy\in\strategies^{\commodity}.
\end{equation}
A family $\flowprof \coloneqq \big(\flowprof^{\commodity}\big)_{\commodity\in\commodities}$, where each $\flowprof^{\commodity}$ is a  feasible flow satisfying \eqref{eq:feasible-flows-per-commodity},  induces aggregate \emph{loads} $\loadprof=(\load_{\resource})_{\resource\in\resources}$ over the resources, given by 
\begin{equation}
\label{eq:edge-loads}
\load_{\resource} \coloneqq \sum_{\commodity\in\commodities}\sum_{\strategy\in\strategies^{\commodity}}\flow_{\strategy}^{\commodity}\mathds{1}_{\{\resource\in\strategy\}},\quad \forall \resource\in\resources,
\end{equation}
which in turn induce \emph{strategy costs},  defined as
\begin{equation}
\label{eq:strategy_cost}
\cost_{\strategy}(\loadprof)\coloneqq\sum_{\resource\in\strategy}\cost_{\resource}(\load_{\resource}), \quad \forall\strategy\subset\resources.
\end{equation}
We call $\flows_{\rateprof}$  the set of \emph{feasible pairs} $(\flowprof,\loadprof)$ 
satisfying \eqref{eq:feasible-flows-per-commodity}
and \eqref{eq:edge-loads}.
We also write $\loads_{\rateprof}$ for the projection of $\flows_{\rateprof}$ on the $\loadprof$ variables, that is, the set of load profiles $\loadprof$ induced by all feasible flow vectors $\flowprof$.

The concept of Wardrop equilibrium is based on the assumption that for each commodity only the strategies with the smallest possible cost are actually used.
A feasible pair $(\flowprof,\loadprof)\in \flows_{\rate}$ is a \emph{Wardrop equilibrium} if there exists a nonnegative vector  $\eqcostprof \coloneqq \parens*{\eqcost^{\commodity}}_{\commodity\in\commodities}$, such that 
\begin{equation}
\label{eq:Wardrop}\forall \commodity\in\commodities,\quad\begin{cases}
\cost_{\strategy}(\loadprof)=\eqcost^{\commodity} & \text{for all }\strategy\in\strategies^{\commodity}\text{ with }\flow_{\strategy}^{\commodity}>0,\\
\cost_{\strategy}(\loadprof)\ge\eqcost^{\commodity} & \text{for all }\strategy\in\strategies^{\commodity}\text{ with }\flow_{\strategy}^{\commodity}=0.
\end{cases}
\end{equation}
The quantity $\eqcost^{\commodity}$ is called the \emph{equilibrium cost} of commodity $\commodity\in\commodities$.
A strategy $\strategy\in\strategies^{\commodity}$ is said to be \emph{active} if $\cost_{\strategy}(\loadprof)=\eqcost^{\commodity}$. Similarly, a resource $\resource\in\resources$ is \emph{active} for commodity $\commodity\in\commodities$ if it belongs to some active strategy.
Clearly, the equilibrium equation implies that every strategy  carrying a strictly positive flow $\flow_{\strategy}^{\commodity}>0$ is necessarily active. 
Note, however, that a strategy with zero flow may still be active as long as its cost matches the minimum.

As shown by \citet{BecMcGWin:Yale1956}, the set of  load profiles induced by  equilibrium flows coincides with the set of optimal solutions of the minimization problem
\begin{equation}
\label{eq:Beckmann}
\min_{\loadprof\in\loads_{\rate}}\sum_{\resource\in\resources}\Cost_\resource(\load_\resource),    
\end{equation}
where $\Cost_\resource(\load_\resource)\coloneqq\int_0^{\load_\resource}\cost_\resource(z)\diff z$. 
Since the cost functions $\cost_\resource$ are continuous and nondecreasing, the above objective function  is convex and differentiable. 
Thus, since $\loads_{\rateprof}$ is a bounded polytope, for every $\rateprof$ there exists at least one optimal solution.

For an equilibrium load profile  $\eq{\loadprof}$, we  define the equilibrium resource costs $\eqcostedge_{\resource} \coloneqq \cost_{\resource}(\eq{\load}_{\resource})$. 
By using Fenchel's duality theory (see \eg \cref{rem:dual} in  \cref{sc:appendix:proofs}, or \citet{Fuk:TRB1984} for the special case of nonatomic routing games), we can prove that the equilibrium resource costs are optimal solutions of the strictly convex dual program 
\begin{equation}\label{eq:Fukushima}
\min_{\eqcostedgeprof}\sum_{\resource\in\resources}\Cost_{\resource}^{*}(\eqcostedge_{\resource}) -
\sum_{\commodity\in\commodities}\left( \rate^{\commodity}\,\min_{\strategy\in\strategies^{\commodity}}\sum_{\resource\in\strategy}\eqcostedge_{\resource}\right),
\end{equation}
where $\Cost_{\resource}^{*}(\argdot)$ is the Fenchel conjugate of $\Cost_\resource(\argdot)$, which is strictly convex. 

Thus, for each $\rateprof$ the equilibrium resource costs $\eqcostedge_\resource$ are uniquely defined  and are the same for all equilibrium loads. 
This implies that the strategy costs 
$\cost_\strategy=\sum_{\resource\in\strategy}\eqcostedge_\resource$ and
equilibrium costs $\eqcost^{\commodity}=\min_{\strategy\in\strategies^{\commodity}}\sum_{\resource\in\strategy}\eqcostedge_\resource$ depend only on $\rateprof$ and not on the particular equilibrium flow under consideration. Thus, also the active strategies and active resources only depend on $\rateprof$.

The \emph{active regime} at demand $\rateprof$ is defined as $\activ{\resourcesprof}(\rateprof)\coloneqq(\activ{\resources}^{\commodity}(\rateprof))_{\commodity\in\commodities}$ with $\activ{\resources}^{\commodity}(\rateprof)$ the set of active resources for commodity $\commodity\in\commodities$. 
We also let  $\rateprof\mapsto\eqcost(\rateprof)$ denote the \emph{equilibrium cost} map, whose basic properties are summarized in the next proposition.

\begin{proposition}
\label{pr:continuity}
Let $\game=(\resources,\costprof,\strategiesprof)$ be a congestion game structure. 
Then  the equilibrium cost map $\rateprof\mapsto\eqcost(\rateprof)$ is continuous and monotone in the sense that $\langle\eqcost (\rateprof_{1})- \eqcost(\rateprof_{2}),\rateprof_{1}-\rateprof_{2}\rangle\ge0$ for every $\rateprof_{1},\rateprof_{2}\in\Rpos^{\commodities}$.
In particular, each component $\eqcost^{\commodity}(\rateprof)$ is nondecreasing with respect to its own demand $\rate^{\commodity}$.
Moreover, the equilibrium resource costs $\eqcostedge_{\resource}(\rateprof)$ are uniquely defined and continuous.
\end{proposition}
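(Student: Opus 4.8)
The plan is to reduce all the assertions to the strictly convex dual program \eqref{eq:Fukushima}, whose objective I abbreviate as $g(\eqcostedgeprof,\rateprof)\coloneqq\sum_{\resource\in\resources}\Cost_{\resource}^{*}(\eqcostedge_{\resource})-\sum_{\commodity\in\commodities}\rate^{\commodity}\,\phi^{\commodity}(\eqcostedgeprof)$, with $\phi^{\commodity}(\eqcostedgeprof)\coloneqq\min_{\strategy\in\strategies^{\commodity}}\sum_{\resource\in\strategy}\eqcostedge_{\resource}$. Since each $\Cost_{\resource}^{*}$ is strictly convex while each $-\phi^{\commodity}$ is convex (the negative of a pointwise minimum of linear maps), $\eqcostedgeprof\mapsto g(\eqcostedgeprof,\rateprof)$ is strictly convex and thus admits a unique minimizer $\eqcostedgeprof(\rateprof)$. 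This immediately settles the well-definedness claim: the equilibrium resource costs $\eqcostedge_{\resource}(\rateprof)$ are the unique coordinates of $\eqcostedgeprof(\rateprof)$, and the equilibrium costs are recovered as $\eqcost^{\commodity}(\rateprof)=\phi^{\commodity}(\eqcostedgeprof(\rateprof))$.

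For monotonicity I would use a four-point exchange argument that avoids any differentiability. Fixing $\rateprof_{1},\rateprof_{2}$ with unique minimizers $\eqcostedgeprof_{1},\eqcostedgeprof_{2}$, optimality gives $g(\eqcostedgeprof_{1},\rateprof_{1})\le g(\eqcostedgeprof_{2},\rateprof_{1})$ and $g(\eqcostedgeprof_{2},\rateprof_{2})\le g(\eqcostedgeprof_{1},\rateprof_{2})$. Adding these two inequalities, the $\Cost_{\resource}^{*}$ terms cancel because they do not depend on $\rateprof$, and after rearranging one is left with $\sum_{\commodity\in\commodities}(\rate_{1}^{\commodity}-\rate_{2}^{\commodity})\bigl(\phi^{\commodity}(\eqcostedgeprof_{1})-\phi^{\commodity}(\eqcostedgeprof_{2})\bigr)\ge0$, which is exactly $\langle\eqcost(\rateprof_{1})-\eqcost(\rateprof_{2}),\rateprof_{1}-\rateprof_{2}\rangle\ge0$. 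Conceptually this is just the statement that $\eqcost$ is the gradient of the convex map $\rateprof\mapsto-\min_{\eqcostedgeprof}g(\eqcostedgeprof,\rateprof)$, whose gradient is automatically monotone. Choosing $\rateprof_{1},\rateprof_{2}$ that differ only in the coordinate $\commodity$ then forces $\eqcost^{\commodity}$ to be nondecreasing in its own demand $\rate^{\commodity}$.

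The continuity statement is the delicate part, and I would derive it from Berge's maximum theorem applied to the minimization $\min_{\eqcostedgeprof}g(\eqcostedgeprof,\rateprof)$. The effective domain $\{\eqcostedgeprof:\sum_{\resource\in\resources}\Cost_{\resource}^{*}(\eqcostedge_{\resource})<\infty\}$ does not depend on $\rateprof$ and $g$ is jointly continuous, so the crux is a local coercivity/compactness condition confining the minimizers to a fixed compact set as $\rateprof$ ranges over a neighborhood. This is precisely where I expect the main obstacle: the demand term $-\sum_{\commodity}\rate^{\commodity}\phi^{\commodity}(\eqcostedgeprof)$ grows only linearly in $\eqcostedgeprof$, so I must show it is dominated either by the superlinear growth of $\sum_{\resource}\Cost_{\resource}^{*}$ (when the costs $\cost_{\resource}$ are unbounded) or by the boundedness of the effective domain (when they are bounded), uniformly for $\rateprof$ in any ball $\{\rateprof:\norm{\rateprof}\le R\}$. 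Granting this, the sublevel sets of $g(\argdot,\rateprof)$ lie in a common compact set $K_{R}$, Berge's theorem yields upper hemicontinuity of the solution map, and uniqueness collapses it to a continuous function $\rateprof\mapsto\eqcostedgeprof(\rateprof)$. Continuity of $\eqcost^{\commodity}(\rateprof)=\min_{\strategy\in\strategies^{\commodity}}\sum_{\resource\in\strategy}\eqcostedge_{\resource}(\rateprof)$ then follows for free, as a finite minimum of continuous functions. A little extra care is needed on the boundary of $\Rpos^{\commodities}$, where some demands vanish, but the coercivity bound is uniform there as well.
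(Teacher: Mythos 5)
Your overall architecture is genuinely different from the paper's. You derive everything from the dual program \eqref{eq:Fukushima}, whereas the paper works with the Beckmann value function $\valueW(\rateprof)$ from \eqref{eq:min-Beckmann} and a perturbation-duality argument in the \emph{commodity} variables: it shows that the dual \eqref{eq:dual} (whose unknowns are the $\eqcost^{\commodity}$, not the resource costs) has the unique solution $\eqcost(\rateprof)$, so that $\partial\valueW(\rateprof)=\{\eqcost(\rateprof)\}$; hence $\valueW$ is differentiable, and since a differentiable convex function is automatically $C^{1}$ with monotone gradient, continuity and monotonicity of $\eqcost$ follow at once, with no compactness argument anywhere. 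Within your framework, the four-point exchange for monotonicity is correct, granted the facts stated in the paper before the proposition (the equilibrium resource costs solve \eqref{eq:Fukushima}, and $\eqcost^{\commodity}(\rateprof)=\min_{\strategy\in\strategies^{\commodity}}\sum_{\resource\in\strategy}\eqcostedge_{\resource}(\rateprof)$); it is the discrete counterpart of ``gradients of convex functions are monotone'' and is, if anything, more elementary and self-contained than the paper's route to monotonicity.

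The genuine gap is in the continuity part, and it is not only the coercivity estimate you leave with ``granting this''. First, your claim that the coercivity bound is ``uniform on the boundary as well'' is false: if a commodity $\commodity$ has $\rate^{\commodity}=0$ and some resource $\resource$ is feasible only for $\commodity$, then the objective of \eqref{eq:Fukushima} is constant as $\eqcostedge_{\resource}\to-\infty$, because $\Cost_{\resource}^{*}\equiv 0$ on $(-\infty,\cost_{\resource}(0)]$ and no positively weighted term involves $\eqcostedge_{\resource}$; the solution set is then unbounded and Berge's theorem cannot be applied as stated. Second, and worse for your step ``uniqueness collapses upper hemicontinuity to continuity'', the unrestricted dual need not have a unique minimizer even for strictly positive demands: take a single commodity with two singleton strategies, $\cost_{1}(\load)=\load$, $\cost_{2}(\load)=1+\load$, and demand $1/2$; then every point $(\tfrac12,\eqcostedge_{2})$ with $\eqcostedge_{2}\in[\tfrac12,1]$ minimizes \eqref{eq:Fukushima}, because $\Cost_{2}^{*}$ is flat below $\cost_{2}(0)=1$. (Strict convexity of $\Cost_{\resource}^{*}$ holds only on the range of $\cost_{\resource}$, not on all of $\reals$.) Since you obtain continuity of $\eqcost$ from continuity of $\eqcostedgeprof$, both continuity claims of the proposition remain open in your proposal. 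A repair consistent with your approach is to minimize \eqref{eq:Fukushima} over the compact box $\prod_{\resource\in\resources}[\cost_{\resource}(0),\cost_{\resource}(M)]$, where $M$ bounds the total demand locally: the equilibrium resource costs still solve this restricted program, the objective is genuinely strictly convex there, and Berge then applies. Note that the paper's own Berge step (used only for $\eqcostedgeprof$) shares some of this looseness, but its proof of continuity of $\eqcost$ --- the main claim --- never touches \eqref{eq:Fukushima} and needs no compactness; that value-function differentiability device is what your proposal is missing.
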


\cref{pr:continuity} is a simple extension of \citet[Proposition 3.1]{ComDosSca:MP2021}
to the multi-commodity setting. 
See also \citet{Hal:TS1978} for the case of strictly increasing costs.
For the sake of completeness, we include a proof of \cref{pr:continuity} in \cref{sc:appendix:proofs}.

\begin{remark}
\label{rm:continuity-of-flows-when-costs-invertible}
When the cost functions are strictly increasing, thus invertible, \cref{pr:continuity} implies that the  equilibrium load vector $\loadprof(\rateprof)$  
is unique for every $\rateprof\in\Rpos^{\commodities}$, and the map $\rateprof\mapsto\load(\rateprof)$ is continuous. 
If the costs are just nondecreasing, the equilibrium loads may be non-unique. Here we point out that there exists  some literature about the characterization of games having the so-called uniqueness property \citep[see, \eg][]{Mil:MOR2000,Kon:TS2004,Mil:MOR2005,MeuPra:EJOR2014}. 
A natural question for the case of multiple equilibria is whether there exists a continuous selection $\rateprof\mapsto\load(\rateprof)$. 
\end{remark}

Routing games are an important instance of congestion games. 
In this class of games there is a finite network in the background with a finite set of \ac{OD} pairs (the commodities of the game); edges are the resources and paths from one origin to the corresponding destination are the strategies of the game. 

\citet{Hal:TS1978} proved that in routing games an increase of traffic demand for one \ac{OD} pair---when the remaining demands are kept fixed---weakly  increases the traveling time of this \ac{OD} pair.  
The following example, due to \citet{Fis:TRB1979}, shows that an increase in the traffic demand of one \ac{OD} pair may actually reduce the traveling time of another \ac{OD} pair.
\citet{Fis:TRB1979} showed that it is also possible for the social cost ${\SC}(\rateprof)=\sum_{\commodity\in \commodities} \rate^{\commodity}\eqcost^{\commodity}(\rateprof)$ to decrease along a direction where the total demand $\sum_{\commodity\in\commodities}\rate^{\commodity}$ increases.

\begin{example}
\label{ex:Fisk}
Consider the network depicted in \cref{fig:Fisk} with three \ac{OD} pairs $(a,b)$, $(b,c)$, $(a,c)$.

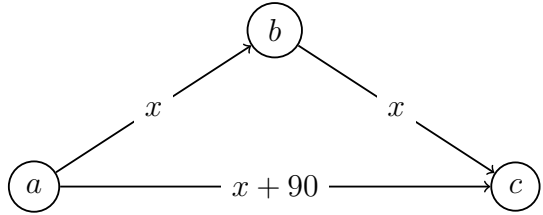
\begin{figure}[ht]
\centering
\hspace{2cm}\begin{tikzpicture}[scale=0.8]
    \node[shape=circle,draw=black,line width=.7pt] (v1) at (0,0)  {$a$}; 
   \node[shape=circle,draw=black,line width=.7pt] (v2) at (4,2.6)  { $b$}; 
   \node[shape=circle,draw=black,line width=.7pt] (v6) at (8,0)  { $c$}; 
   \draw[line width=.7pt,->] (v1) to   node[midway,fill=white] {$\load$} (v2);
   \draw[line width=.7pt,->] (v1) to   node[midway,fill=white] {$\load+90$} (v6);
   \draw[line width=.7pt,->] (v2) to   node[midway,fill=white] {$\load$} (v6);
\end{tikzpicture}
\caption{Fisk's network. }
\label{fig:Fisk}
\end{figure}
\noindent
Let the initial demands be $
\rate^{(a,b)}=1$, $\rate^{(a,c)}=20$,
$\rate^{(b,c)}=100$,
and let the cost functions be as in \cref{fig:Fisk}.
The equilibrium loads are
$\load_{(a,b)}=4$, 
$\load_{(a,c)}=17$, 
$\load_{(b,c)}=103$,  
and the corresponding equilibrium costs are \begin{equation*}
\eqcost^{(a,b)}=4,\quad
\eqcost^{(a,c)}=107,\quad
\eqcost^{(b,c)}=103.    
\end{equation*}  
If we now let the demand $\rate^{(a,b)}$ rise from $1$ to $4$, the new equilibrium loads are
$\load_{(a,b)}=6$,
$\load_{(a,c)}=18$,
$\load_{(b,c)}=102$,   
and the corresponding equilibrium costs are
\begin{equation*}
\eqcost^{(a,b)}=6,\quad
\eqcost^{(a,c)}=108,\quad
\eqcost^{(b,c)}=102.    
\end{equation*}
That is, the increase of $\rate^{(a,b)}$ increases the cost of the edge $ab$ and pushes the $(a,c)$ pair to favor the use of the direct edge $ac$.
This reduces the load on the edge $bc$, which ultimately benefits the pair $(b,c)$  by reducing its cost.

Perhaps more surprising is the fact that this phenomenon may even occur when all the demands increase by the same factor: with demands $\rate^{(a,b)}=60$, $\rate^{(a,c)}=30$, $\rate^{(b,c)}=6$
the equilibrium cost for the third \ac{OD} is $\eqcost^{(b,c)}=24$, and when all the demands are doubled it decreases to $\eqcost^{(b,c)}=18$.

\end{example}

\begin{example}
\label{ex:Wheatstone}
Even in  networks with a single \ac{OD} pair, where the equilibrium cost  increases with the demand, it may happen that the load on some edges decrease after a surge in the demand. 
This can be observed in the classical Wheatstone network depicted in \cref{fig:classic_braess} \citep[see][for the famous paradox that uses this network]{Bra:U1968,BraNagWak:TS2005}. 
\begin{figure}[ht]
\centering
\setcounter{subfigure}{0}
\subfigure[Wheatstone network]
{
\begin{tikzpicture}[thick,scale=0.8, every node/.style={transform shape}]
   \node[shape=circle,draw=black,line width=1pt,minimum size=0.5cm] (v1) at (-3,0)  { $\source$}; 
   \node[shape=circle,draw=black,line width=1pt,minimum size=0.5cm] (v2) at (0,1.5)  {$\vertex_{1}$}; 

   \node[shape=circle,draw=black,line width=1pt,minimum size=0.5cm] (v5) at (0,-1.5)  {$\vertex_{2}$}; 
   \node[shape=circle,draw=black,line width=1pt,minimum size=0.5cm] (v6) at (3,0)  {$\sink$}; 
    
   \draw[line width=1pt,->] (v1) to   node[midway,fill=white] {$\load$} (v2);
   \draw[line width=1pt,->] (v1) to   node[midway,fill=white] {$1$} (v5);
   \draw[line width=1pt,->] (v2) to   node[midway,fill=white] {$0$} (v5);
   
   \draw[line width=1pt,->] (v2) to   node[midway,fill=white] {$1$} (v6);

   \draw[line width=1pt,->] (v5) to   node[midway,fill=white] {$\load$} (v6); 
        
\end{tikzpicture}
\label{fig:graph_Braess}
}
\hspace{.2cm}
\subfigure[Equilibrium flows for different values of the demand $\rate$]{
\raisebox{1.5cm}{
\begin{tabular}{c|c|c|c}
& $\source\,\vertex_{1}\,\vertex_{2}\sink$ & $\source\,\vertex_{1}\sink$ & $\source\,\vertex_{2}\sink$ \\
\hline
$\rate\in[0,1]$ & $\rate$ & $0$ & $0$ \\
\hline
$\rate\in[1,2]$ & $2-\rate$ & $\rate-1$ & $\rate-1$
\\
\hline
$\rate\in[2,+\infty)$ & $0$ & $\rate/ 2$ & $\rate/2$
\end{tabular}}
}
\caption{In the Wheatstone network with three paths and a single \ac{OD} pair, the equilibrium load on the vertical edge $(\vertex_{1},\vertex_{2})$ equals the equilibrium flow on the path $\source\,\vertex_{1}\,\vertex_{2}\sink$ and is decreasing for $\rate\in[1,2]$.}
\label{fig:classic_braess}
\end{figure}
\end{example}

In what follows, we want to determine if a congestion game has an equilibrium selection such that the resource loads are monotone with respect to an increase in any demand. This is made precise in the following definition.

\begin{definition}
A congestion game structure $\game=(\resources,\costprof,\strategiesprof)$ is said to have a \acfi{MES}\acused{MES} 
if there exists an equilibrium load vector $\loadprof(\rateprof)$ 
such that for every resource $\resource\in\resources$ the map $\rateprof\mapsto \loadprof_{\resource}(\rateprof)$ is nondecreasing 
with respect to each component $\rate^{\commodity}$ of the demand vector $\rateprof$.
\end{definition}

In mixed scenarios where some demands increase and other decrease, one may naturally expect that the same holds for the induced equilibrium loads. However, it is still of interest to identify groups of resources whose equilibrium loads vary in the same direction, regardless whether $\rateprof$ and $\rateprofalt$ are  comparable or not. 
In such a case, observing an increase/decrease in the load of a specific resource one can infer that all the remaining loads in the group move in the same direction. 
This property is captured by the  notion of comonotonicity: 
a family of functions $\{\comonf_{i}:\Omega\to\reals\}_{i\in A}$ is \emph{comonotonic} if
for all $i,j\in A$ we have
\begin{equation}
\label{eq:comonotonic}
\forall \omega_{1},\omega_{2}\in\Omega, \quad (\comonf_{i}(\omega_{1})-\comonf_{i}(\omega_{2})) (\comonf_{j}(\omega_{1})-\comonf_{j}(\omega_{2})) \ge 0.
\end{equation}
For singleton congestion games, we will identify subsets of resources whose equilibrium loads exhibit such comonotonic behavior in specific regions of the space of demands $\reals_+^{\commodities}$. 
Informally, we will show that a group of commodities that  share the same equilibrium cost behave as a single commodity, and the loads on the resources used by this group are comonotonic.

%
% Section --------------------------------------
%

\section{Monotonicity in singleton congestion games}\label{sc:singleton}
In a \emph{singleton congestion game}  each strategy corresponds to a single resource. Thus, for every commodity $\commodity\in\commodities$ the set of feasible strategies $\strategies^\commodity$ can be viewed as a subset $\resources^\commodity\subset\resources$ of the set of resources.
The following result shows that the \ac{MES} property holds in this case.

\begin{theorem}
\label{thm:singleton-congestion-games}
Every singleton congestion game  structure $\game=(\resources,\costprof,\strategiesprof)$ has a \ac{MES}.
\end{theorem}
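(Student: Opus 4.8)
The plan is to construct one equilibrium load selection that is nondecreasing in every demand by first treating strictly increasing costs, where loads are unique, and then passing to the limit via a regularization that pins down a single selection. For $\varepsilon>0$ consider the perturbed game $\game^{\varepsilon}$ with costs $\cost_{\resource}^{\varepsilon}(\load)=\cost_{\resource}(\load)+\varepsilon\load$, which are continuous and strictly increasing. By \cref{rm:continuity-of-flows-when-costs-invertible} the game $\game^{\varepsilon}$ has a unique equilibrium load vector $\loadprof^{\varepsilon}(\rateprof)$, which by \cref{eq:Beckmann} applied to $\game^{\varepsilon}$ is the unique minimizer over the polytope $\loads_{\rateprof}$ of the strictly convex potential $\sum_{\resource\in\resources}\bigl(\Cost_{\resource}(\load_{\resource})+\tfrac{\varepsilon}{2}\load_{\resource}^{2}\bigr)$. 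Standard Tikhonov regularization then guarantees that, as $\varepsilon\downarrow0$, $\loadprof^{\varepsilon}(\rateprof)$ converges (as a whole net, with no subsequence extraction) to the unique minimizer $\loadprof^{\ast}(\rateprof)$ of $\tfrac12\sum_{\resource}\load_{\resource}^{2}$ over the compact convex set of equilibrium loads at $\rateprof$. This yields a selection $\rateprof\mapsto\loadprof^{\ast}(\rateprof)$ that is defined for every $\rateprof\in\Rpos^{\commodities}$, and since a pointwise limit of maps that are nondecreasing in each $\rate^{\commodity}$ is again nondecreasing, it suffices to prove that each $\load_{\resource}^{\varepsilon}(\rateprof)$ is nondecreasing in each $\rate^{\commodity}$.

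For the strictly increasing game $\game^{\varepsilon}$ the cost $\cost_{\resource}^{\varepsilon}$ is invertible, so $\load_{\resource}^{\varepsilon}(\rateprof)=(\cost_{\resource}^{\varepsilon})^{-1}\bigl(\eqcostedge_{\resource}^{\varepsilon}(\rateprof)\bigr)$ with $(\cost_{\resource}^{\varepsilon})^{-1}$ increasing. Hence it is enough to show that each equilibrium resource cost $\eqcostedge_{\resource}^{\varepsilon}(\rateprof)$ is nondecreasing in every $\rate^{\commodity}$. I would obtain this from the dual program \cref{eq:Fukushima}: $\eqcostedge^{\varepsilon}(\rateprof)$ is the unique minimizer (by strict convexity) of
\[
D(\eqcostedge;\rateprof)=\sum_{\resource\in\resources}(\Cost_{\resource}^{\varepsilon})^{*}(\eqcostedge_{\resource})-\sum_{\commodity\in\commodities}\rate^{\commodity}\,\min_{\resource\in\resources^{\commodity}}\eqcostedge_{\resource},
\]
where in a singleton game the inner minimum over strategies collapses to the minimum over resources $\resource\in\resources^{\commodity}$. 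The key structural fact, specific to singletons, is that $\eqcostedge\mapsto\min_{\resource\in\resources^{\commodity}}\eqcostedge_{\resource}$ is supermodular. Consequently each term $-\rate^{\commodity}\min_{\resource\in\resources^{\commodity}}\eqcostedge_{\resource}$ (with $\rate^{\commodity}\ge0$) is submodular in $\eqcostedge$, and because the resource-wise minimum is nondecreasing in each coordinate, these terms also have decreasing differences in $(\eqcostedge,\rate^{\commodity})$; the remaining sum $\sum_{\resource}(\Cost_{\resource}^{\varepsilon})^{*}(\eqcostedge_{\resource})$ is separable, hence modular. Thus $D(\cdot;\rateprof)$ is submodular and has decreasing differences in $(\eqcostedge,\rateprof)$ on the lattice $\prod_{\resource}\dom(\Cost_{\resource}^{\varepsilon})^{*}$, so Topkis's monotonicity theorem (in its minimization form) shows that the minimizer set is nondecreasing in each $\rate^{\commodity}$; being a singleton, this minimizer $\eqcostedge^{\varepsilon}(\rateprof)$ is itself nondecreasing in each $\rate^{\commodity}$. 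Combining this with the previous two reductions proves that $\loadprof^{\ast}$ is a \ac{MES}.

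I expect the main obstacle to be conceptual rather than computational: producing a single globally defined monotone selection, rather than a per-demand subsequential limit of the perturbed equilibria. This is exactly what the Tikhonov argument resolves, since the regularized limit $\loadprof^{\ast}(\rateprof)$ is unique for each $\rateprof$, so monotonicity transfers to the limit without any diagonal or measurable-selection machinery. A secondary technical point is the careful verification of the lattice-theoretic hypotheses for $D$; here the decisive and genuinely structural step is the supermodularity of $\min_{\resource\in\resources^{\commodity}}\eqcostedge_{\resource}$, which is precisely what fails once strategies are no longer singletons (the inner term in \cref{eq:Fukushima} then becomes a minimum of sums $\sum_{\resource\in\strategy}\eqcostedge_{\resource}$), and this is what confines the clean monotonicity conclusion to singleton congestion games.
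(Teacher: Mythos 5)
Your proposal is correct in substance, and its outer shell is exactly the paper's: perturb the costs to make them strictly increasing, invoke Tikhonov regularization so that the perturbed equilibria converge to the minimal-norm equilibrium, and observe that pointwise limits of maps that are nondecreasing in each demand are again nondecreasing. Where you genuinely diverge is in the core monotonicity argument for strictly increasing costs. The paper works on the primal side: it fixes $\rateprof_{0}$ and a commodity $\commodity$, restricts attention to the set $\resources_{0}$ of resources whose cost equals $\eqcost^{\commodity}(\rateprof_{0})$, partitions $\resources_{0}$ according to whether loads rise, fall, or stay put after a small increment of $\rate^{\commodity}$, and derives a contradiction with the Wardrop conditions (flow conservation would force some commodity to have feasible resources both among the strictly cheaper and the strictly more expensive ones); this is a local argument, globalized via continuity of the load map (\cref{rm:continuity-of-flows-when-costs-invertible}). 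You instead work on the dual \eqref{eq:Fukushima}: for singletons the strategy minimum collapses to the coordinate minimum $\min_{\resource\in\resources^{\commodity}}\eqcostedge_{\resource}$, which is indeed supermodular (for $f(\eqcostedge)=\min_{\resource\in\setS}\eqcostedge_{\resource}$ one has $f(x\wedge y)=\min\{f(x),f(y)\}$ and $f(x\vee y)\geq\max\{f(x),f(y)\}$), so the dual objective is submodular with decreasing differences in $(\eqcostedge,\rateprof)$, and Topkis yields monotone comparative statics; loads then inherit monotonicity by inverting the strictly increasing costs. Your route buys a directly global statement (no local-to-global passage), and it isolates the structural reason singletons are special: for non-singleton strategies the inner term is a minimum of modular functions $\sum_{\resource\in\strategy}\eqcostedge_{\resource}$, which is not supermodular (\eg $\min\{x_{1}+x_{2},x_{3}\}$ fails), precisely where the paradoxes live.

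One technical point needs patching. The dual objective is not strictly convex on all of $\reals^{\resources}$: each conjugate $(\Cost_{\resource}^{\varepsilon})^{*}$ is constant (equal to zero) on $(-\infty,\cost_{\resource}(0)]$, so the dual argmin need not be a singleton --- a resource unused at equilibrium can be priced anywhere in a nontrivial interval. (The paper's Section 2 is loose on the same point; what is genuinely unique, per \cref{pr:continuity}, is the equilibrium resource cost vector, not the full dual solution set.) Since your Topkis step leans on uniqueness of the minimizer, repair it in one of two standard ways: either restrict the dual to the fixed sublattice $\prod_{\resource}[\cost_{\resource}(0),+\infty)$, on which the objective is strictly convex and whose unique minimizer is exactly the equilibrium resource cost vector; or keep the set-valued conclusion of Topkis (monotonicity of the argmin in the strong set order) and select the greatest element of the dual solution set, which is again the equilibrium cost vector, noting that greatest elements of strong-set-order-increasing sublattices are themselves monotone. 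With either repair the argument closes, so this is a fixable inaccuracy rather than a gap.
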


\begin{proof}
We first prove the result for strictly increasing cost functions, and we then use a regularization argument to address the general case of nondecreasing costs.

Suppose first that 
the costs $\cost_{\resource}(\argdot)$ are strictly increasing. 
We will prove the existence of a \ac{MES} locally by showing that for every demand vector $\rateprof_0\in\Rpos^{\commodities}$
and every commodity $\commodity\in\commodities$, there exists $\varepsilon>0$ such that  
$\load_\resource(\rateprof_0+\var \canb^{\commodity})\ge\load_\resource(\rateprof_0)$ for all $\var\in[0,\varepsilon]$, 
where $\canb^{\commodity}$ is the $\commodity$-th vector of the canonical basis of $\mathbb R^{\commodities}$. 
The global \ac{MES} property throughout the space of demands then follows from the continuity of the map $\rateprof\mapsto\loadprof(\rateprof)$ (see \cref{rm:continuity-of-flows-when-costs-invertible}). 

Let $\resources_0$ be the set of resources such that $\cost_{\resource}(\load_{\resource}(\rateprof_0))=\eqcost^{\commodity}(\rateprof_0)$. 
This set  contains the active resources for commodity $\commodity$ but may also include resources used by other commodities and that are not feasible for $\commodity$.
By continuity of the equilibrium costs (\cref{pr:continuity}), there exists $\varepsilon>0$ such that an increase in the demand for commodity $\commodity$ by an amount $\var$ smaller than  $\varepsilon$ can only affect the equilibrium loads of resources in  $\resources_0$, and therefore for $\resource\notin \resources_0$ and $\var\in[0,\varepsilon]$ we have $\load_\resource(\rateprof_0+\var \canb^{\commodity})=\load_\resource(\rateprof_0)$.
Let us then focus on the resources  $\resource\in\resources_0$. Fix an arbitrary $\var\in[0,\varepsilon]$ and partition $\resources_0$ into the three subsets
\begin{align}
\label{eq:R0+}
\resources_0^{+}&\coloneqq \{\resource\in \resources_0\colon \load_\resource(\rateprof_0+\var \canb^{\commodity})>\load_\resource(\rateprof_0)\},\\
\label{eq:R0-}
\resources_0^{-}&\coloneqq \{\resource\in \resources_0\colon \load_\resource(\rateprof_0+\var \canb^{\commodity})<\load_\resource(\rateprof_0)\},\\
\label{eq:R0=}
\resources_0^{=} &\coloneqq \{\resource\in \resources_0\colon \load_\resource(\rateprof_0+\var \canb^{\commodity})=\load_\resource(\rateprof_0)\}.   
\end{align}
Suppose by contradiction that  $\resources_0^-$ is not empty. 
Since the total demand at $\rateprof_0+\var \canb^{\commodity}$ is strictly larger than the total demand at $\rateprof_0$, whereas  
the total flow on the resources  
$\resources_0^{-} \cup \resources_0^{=}$
decreases, some flow must have been transferred from  $\resources_0^{-} \cup \resources_0^{=}$ to $\resources_0^{+}$.
This implies the existence of a commodity $\commodityalt$ which has feasible resources  in  both $\resources_0^{-} \cup \resources_0^{=}$ and $\resources_0^{+}$, and which sends a positive flow along a resource in $\resources_0^{+}$at demand $\rateprof_0+\var \canb^{\commodity}$.
This contradicts the equilibrium condition for that commodity  because the cost of all resources in $\resources_0^{+}$ is strictly higher than the cost of the resources in $\resources_0^{-} \cup \resources_0^{=}$.
This establishes the existence of a \ac{MES} for the case of strictly increasing costs. 

When  costs $\cost_\resource(\load_\resource)$ are assumed to be just nondecreasing,  we perturb them as $\cost^{\varepsilon}_\resource(\load_\resource) \coloneqq \cost_\resource(\load_\resource)+2\varepsilon \load_\resource$ with $\varepsilon>0$, to make them strictly increasing, and then consider the limit as $\varepsilon$ approaches zero. 
As recalled in \cref{sc:prelim}, the equilibrium flow $\loadprof(\rateprof,\varepsilon)$ for the congestion game structure $\game^{\varepsilon} \coloneqq (\resources,\costprof^{\varepsilon},\strategiesprof)$ is the unique solution of the Beckmann problem \eqref{eq:Beckmann}, which in this case has the form
\begin{equation}
\label{eq:Beckmann-regularized}
\min_{\load\in\loads_{\rate}}\sum_{\resource\in\resources}\Cost_\resource(\load_\resource)+\varepsilon\lVert\loadprof\rVert^2,
\end{equation}
with $\Cost_\resource(\load_\resource) \coloneqq \int_0^{\load_\resource}\cost_\resource(z)\diff z$. 
Tikhonov regularization \citep[see, \eg][section 1.1]{Att:SIAMJO1996}
 tells us that $\loadprof(\rateprof,\varepsilon)$ converges, as $\varepsilon$ approaches zero, to the minimal norm equilibrium $\loadprof_0(\rateprof)$ of the original unperturbed game  $\game$. 
From the previous case of strictly increasing costs, for each $\varepsilon>0$ the map $\rateprof\mapsto\loadprof(\rateprof,\varepsilon)$ is nondecreasing with respect to each demand $\rate^{\commodity}$, and this  property  is inherited by $\rateprof\mapsto\loadprof_0(\rateprof)$ in the limit as $\varepsilon\downarrow 0$, providing a \ac{MES} as claimed.
\end{proof}

\begin{remark}
The quadratic regularizer $\varepsilon\|\loadprof\|^2$ was introduced by Tikhonov in the study of ill-posed inverse problems \citep{Tih:CRASURSS1943,Tih:DANSSSR1963,TihArs:WS1977}.
It is also the basis of \emph{ridge regression} in statistics \citep{Hoe:CEP1959,Hoe:CEP1962,HoeKen:T1970}.
In our setting this is just one choice among others, and can be replaced by a separable regularizer $\varepsilon\sum_{i=1}^ng_i(\load_i)$ with $g_i'(\argdot)$ strictly increasing. 
Every such regularizer selects a specific optimal solution in the limit when $\varepsilon\downarrow 0$ (see \citet[theorem 2.1]{Att:SIAMJO1996}
and
\citet[proposition 2.5]{AusComHad:MOR1997}.
Moreover, one can verify that the previous proof is still valid and yields a  monotone selection of the set of Wardrop equilibria. 
In particular,  $\varepsilon\sum_{i=1}^n \load_i\log(\load_i)$ selects the Wardrop equilibrium of maximal entropy. 
A similar entropic regularization was used in \citet{RosMcNHen:JUPD1989} to select one among multiple flow decompositions of a Wardrop equilibrium \citep[see][for a survey of related work]{BorBreKerSto:TRB2015}.
In our case we  deal with multiple equilibria and the regularization is used to obtain a selection with monotonicity properties.
As alternatives one may consider general penalty schemes of the form $\varepsilon\sum_{i=1}^n\theta(\load_i/\varepsilon)$, including the 
classical log-barrier $\theta(\load)=-\log(\load)$, the inverse-barrier $\theta(\load)=1/\load$, the exponential penalty $\theta(\load)=\exp(-\load)$, and more  \citep[see][]{Com:Springer1999}.
Let us also mention the multi-scale regularizer  $\sum_{i=1}^n \varepsilon^i \load_i^2$, which yields  a hierarchical selection principle: select the Wardrop equilibria that have the smallest first coordinate $\load_{1}^2$, among  these the ones with smallest $\load_{2}^2$, and inductively with $\load_3^2,\ldots,\load_n^2$.
\end{remark}

\begin{remark}
\cref{thm:singleton-congestion-games} is related to  results in  \citet{FujGoeHarPeiZen:MOR2017}, which investigates Braess's paradox in the context of \emph{nonatomic matroid congestion games}, where the strategy set for each commodity $\commodity$ is the set $\mathcal{B}^{\commodity}$ of bases of some matroid $M^{\commodity}=(\resources,\mathcal{I}^{\commodity})$, defined over a common ground set $\resources$ of resources. 
Among other results, lemma~3.2  in that paper establishes the monotonicity of the resource costs at equilibrium, from which one can readily deduce the monotonicity of the loads when the cost functions are strictly increasing. 
\end{remark}

%
% Section --------------------------------------
%

\section{Comonotonicity and Active Regimes in Singleton Congestion Games}
\label{sc:parallel-regimes}

\cref{thm:singleton-congestion-games} shows that the equilibrium loads in  singleton congestion games respond monotonically when all the demands increase or stay the same. In mixed cases where some demands increase and others decrease, one can still identify groups of resources that behave comonotonically in specific regions of the space of demands.
A trivial example is when all  commodities
can use every resource $\resources^{\commodity}\equiv\resources$, so they can be treated as a single commodity and the equilibrium loads are just nondecreasing functions of the total demand $\rate_{\commodities}=\sum_{\commodity\in\commodities}\rate^{\commodity}$.
More generally, we will show that a subset $\costclass\subset\commodities$ of commodities that have the same equilibrium cost, behave as if they were a single-commodity on a smaller congestion game restricted to a  subset  $\resources_{\costclass}$ of  resources, and the equilibrium loads of these resources are nondecreasing functions of the aggregate demand $\rate_{\costclass}$ of the group, so that they are  comonotonic.

To state our result  precisely, given a singleton congestion game structure $\game=(\resources,\costprof,\strategiesprof)$, we partition the space of demands $\reals_+^{\commodities}$ into different regions $
\region^{\preceq}$ characterized by the order in which the commodities are ranked by equilibrium cost. In order to understand the geometry of  such regions, we further decompose them into sub-regions corresponding to different 
active regimes.

\begin{definition}
\label{def:subregions}
For any fixed weak order $\precsim$ on $\commodities$  we call  $\region^\precsim$ the set of demands
that rank the commodities exactly in this order, that is,  
\begin{equation}
\label{eq:Gamma-prec}
\region^\precsim=\braces*{\rateprof\in\Rpos^{\commodities} \colon \eqcost^{\commodity}(\rateprof)\leq \eqcost^{\commodityalt}(\rateprof) \iff \commodity\precsim\commodityalt \text{ for all } \commodity,\commodityalt\in\commodities},
\end{equation}
and we call $\region^{\precsim}_{\regimeprof}$ the \emph{sub-region with active regime}
$\regimeprof\coloneqq\parens*{\regime^{\commodity}}_{\commodity\in\commodities}$ with $\regime^{\commodity}\subset\resources^{\commodity}$, that is,
\begin{equation}
\label{eq:Gamma-rho-prec}    \region^{\precsim}_{\regimeprof}
\coloneqq\braces*{\rateprof\in \region^{\precsim} \colon
\activ{\resourcesprof}(\rateprof)=\regimeprof
}.
\end{equation}
\end{definition}

We recall that  the equivalence relation and strict order associated with $\precsim$ are  defined by
\begin{align*}
(\commodityalt \sim \commodity) & \text{ if and only if }
(\commodity \precsim \commodityalt)\text{ and } (\commodityalt \precsim \commodity),\\
(\commodityalt \succ \commodity) &\text{ if and only if }
(\commodity \precsim \commodityalt)\text{ and } \neg(\commodityalt \precsim \commodity).
\end{align*}
The relation $\sim$ partitions $\commodities$ into equivalence classes, called \emph{cost classes}: two commodities are in the same cost class if and only if $\commodity\sim\commodityalt$, that is to say, if and only if 
$\eqcost^{\commodity}(\rateprof)=\eqcost^{\commodityalt}(\rateprof)$ for all $\rateprof\in\region^{\precsim}$. 
To each cost class $\costclass$ we associate the subset $\resources_{\costclass}$ 
of all the resources $\resource\in\resources$ that are feasible for some commodity $\commodity\in\costclass$,  excluding
those which are also feasible for higher ranked commodities $\commodityalt\succ\commodity$, that is
\begin{equation}
\label{eq:R-C}
\resources_\costclass=\parens*{\cup_{\commodity\in\costclass}\resources^{\commodity}} \setminus \parens*{\cup_{\commodityalt\succ\costclass}\resources^{\commodityalt}}.
\end{equation}

\begin{definition}
\label{def:region-order}
 Let $\costclass$ be a cost class for a  weak order $\precsim$ on $\commodities$. 
We let
$\game_\costclass\coloneqq(\resources_\costclass,\costprof,\strategies_{\costclass} )$ denote the  singleton congestion game strucutre with a \emph{single commodity} whose strategy set $\strategies_{\costclass}$  comprises  all the singletons in $\resources_\costclass$.
\end{definition}

The regions $\region^\precsim$ can be empty for some orders $\precsim$ (\eg if $\commodity,\commodityalt\in\commodities$ are such that  $\resources^{\commodity}\subseteq\resources^{\commodityalt}$  we cannot have $\eqcost^{\commodity}(\rateprof)<\eqcost^{(\commodityalt)}(\rateprof)$).
We stress that each commodity $\commodity\in\commodities$ belongs to a unique cost class $\costclass$, whereas each resource $\resource$ belongs to the cost class of the highest ranked commodity among those for which $\resource$ is feasible.

\begin{example}
\label{ex:parallel-regions-and-sub-regions-0}

Consider a singleton congestion game structure with three resources $\resources=\{\resource_{1},\resource_{2},\resource_3\}$ 
with affine costs $\cost_{1}(x)=x+1$, $\cost_{2}(x)=x$, $\cost_3(x)=x+2$, and two commodities $\alpha$ and $\beta$ with $\resources^{\alpha}=\braces*{\resource_{1},\resource_{2}}$ and
$\resources^{\beta}=\braces*{\resource_{2},\resource_{3}}$.
\begin{figure}[ht]
\hfil
\subfigure[A routing game with one commodity that uses the two top edges, and a second commodity that uses the bottom two.]{\raisebox{16mm}{
\scalebox{0.8}{\begin{tikzpicture}
   \node[shape=circle,draw] (v1) at (-3,0) {$\source$}; 
   \node[shape=circle,draw] (v6) at (3,0) {$\sink$}; 
   \draw[->] (v1) to [bend left=45] node[midway,fill=white] {$\load+1$} (v6);  
   \draw[->] (v1) to node[midway,fill=white] {$\load$} (v6);
   \draw[->] (v1) to [bend right=45] node[midway,fill=white] {$\load+2$} (v6);    
\end{tikzpicture}
}}
\label{fig:parallel-affine-graph}}
\hfil
\subfigure[The  colors  represent the regions $\region^\precsim$ for the three possible orders $\precsim$ of the equilibrium costs. 
These regions are further decomposed into polyhedral subregions $\region^{\precsim}_{\regime}$ that correspond to different active regimes.]
{\scalebox{0.9}{\includegraphics[width=0.4\textwidth]{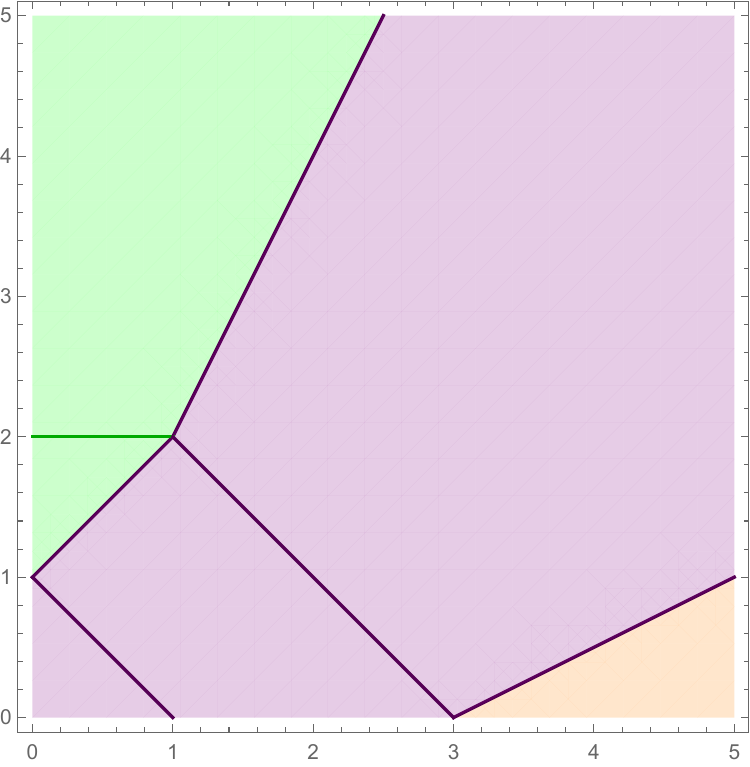}
\label{fig:parallel-affine-picture}}}
\caption{A singleton congestion game with affine costs.}
\label{fig:parallel-affine-0}
\end{figure}
For visualization, \cref{fig:parallel-affine-graph}  represents this  as a routing game on a parallel network, where both commodities have to move traffic between the two vertices, but each of them is allowed to use only certain edges.
In \cref{fig:parallel-affine-picture} the horizontal axis represents the demand of commodity $\alpha$ and the vertical axis the demand of $\beta$.
The three colors represent the regions $\region^\precsim$ corresponding to the possible orders of $\eqcost^{\alpha}$ and $\eqcost^{\beta}$.
In the top-left region in green $\eqcost^{\alpha}<\eqcost^{\beta}$, in the bottom-right region in orange $\eqcost^{\alpha}>\eqcost^{\beta}$, whereas in the middle region in purple $\eqcost^{\alpha}=\eqcost^{\beta}$. 
Hence, in the top-left and bottom-right regions we have two cost classes $\costclass_{1}=\{\alpha\}$ and $\costclass_{2}=\{\beta\}$, each containing one commodity. 
However, in the top-left region the corresponding resource sets
are $\resources_{\costclass_{1}}=\{\resource_{1}\}$ and $\resources_{\costclass_{2}}=\{\resource_{2},\resource_3\}$, whereas in the bottom-right region $\resources_{\costclass_{1}}=\{\resource_{1},\resource_{2}\}$ and $\resources_{\costclass_{2}}=\{\resource_3\}$. 
The region in purple has  a single cost class 
$\costclass=\{\alpha,\beta\}$ with $\resources_{\costclass}=\{\resource_{1},\resource_{2},\resource_3\}$.
The sub-regions delimited by horizontal and diagonal lines
within a colored region, correspond to different active regimes.
In the purple region characterized by   $\eqcost^{\alpha}=\eqcost^{\beta}$, with a single cost class $\costclass=\{\alpha,\beta\}$ and $\resources_{\costclass}=\{\resource_{1},\resource_{2},\resource_3\}$, there are three sub-regions depending on the value of the total demand $
\rate_{\costclass}=\rate^{\alpha}+\rate^{\beta}$. When
$\rate_{\costclass}\in(0,1)$ both $\alpha$ and $\beta$  use only the central edge with active regime $\regime^{\alpha}=\regime^{\beta}=\{\resource_{2}\}$ and equilibrium costs $\eqcost^{\alpha}=\eqcost^{\beta}=\rate_{\costclass}$. 
For $\rate_\costclass\in[1,3)$ we have $\eqcost^{\alpha}=\eqcost^{\beta}=(1+\rate_{\costclass})/2$,
with $\beta$ using the central edge $\regime^{\beta}=\{\resource_{2}\}$, whereas $\alpha$ splits the flow between the top and central edge with
$\regime^{\alpha}=\{\resource_{1},\resource_{2}\}$.
Finally for $\rate_\costclass\geq 3$ the active regime is 
$\regime^{\alpha}=\{\resource_{1},\resource_{2}\}$ and $\regime^{\beta}=\{\resource_{2},\resource_3\}$
with equilibrium costs $\eqcost^{\alpha}=\eqcost^{\beta}=1+\rate_{\costclass}/3$.
Similarly, the green region is characterized by $\eqcost^{\alpha}<\eqcost^{\beta}$ with cost classes $\costclass_{1}=\{\alpha\}$ and $\costclass_{2}=\{\beta\}$. Throughout this green region the active regime for $\alpha$ is constant $\regime^{\alpha}=\{\resource_{1}\}$, whereas 
$\regime^{\beta}=\{\resource_{2}\}$ if $\rate^\beta<2$ and $\regime^{\beta}=\{\resource_{2},\resource_3\}$ if $\rate^{\beta}\geq 2$.  
\end{example}

Our next result describes the equilibrium within a cost class $\costclass$: we show that the loads on the resources in $\resources_{\costclass}$ coincide with those of the single-commodity game $\game_{\costclass}$. In other words, in terms of equilibrium loads the commodities in $\costclass$ behave as if they were a single commodity.
This allows in turn to analyze the regions on which the equilibrium loads on  $\resources_\costclass$ are comonotone. 
Moreover, part \eqref{it:thm:parallel-regions-c} further analyzes the geometry of the regions of comonotonicity, as observed in \cref{ex:parallel-regions-and-sub-regions-0}.
The simple structure exhibited by the sub-regions in \cref{ex:parallel-regions-and-sub-regions-0} holds more generally: even if the cost functions are nonlinear, the sub-regions  are separated by hyperplanes defined by the aggregate demand of some cost class. 
We recall that a \emph{break point} in a single commodity game
is a demand $\bar\rate$ at which the set of active resources
changes, \ie this set is not constant
on any interval $(\bar\rate-\varepsilon,\bar\rate+\varepsilon)$ with $\varepsilon>0$ \citep[see][definition~3.4]{ComDosSca:MP2021}.

\begin{theorem}
\label{thm:parallel-regions}
Let $\game=(\resources,\costprof,\strategiesprof)$ be a singleton congestion game structure,
and $\region^{\precsim}$ the region  associated with a weak order $\precsim$ on $\commodities$. Then, for each  cost class $\costclass$ for $\precsim$ we have: 
\begin{enumerate}[\upshape(a)]
\item 
\label{it:thm:parallel-regions-a}
For all $\rateprof\in\region^\precsim$ and every equilibrium load $\loadprof$ of $(\game,\rateprof)$, the vector
$\bar\loadprof= (\load_\resource)_{\resource\in\resources_{\costclass}}$ is an equilibrium in the single-commodity game $(\game_\costclass,\rate_{\costclass})$ with aggregate demand 
$\rate_\costclass \coloneqq \sum_{\commodity\in{\costclass}} \rate^{\commodity}$.

\item 
\label{it:thm:parallel-regions-b}
If $\game_\costclass$ has a unique equilibrium for each demand in $\Rpos$,
then for $\rateprof\in\region^{\precsim}$ the equilibrium loads $\load_\resource(\rateprof)$ with $\resource\in\resources_\costclass$ can be expressed as nondecreasing functions of the aggregate demand $\rate_{\costclass}$,
which is equivalent to the fact that the equilibrium loads of the resources in $\resources_{\costclass}$ are comonotonic in the region $\region^\precsim$.

\item 
\label{it:thm:parallel-regions-c}
If the costs are strictly increasing, then 
the boundary between the sub-regions $\region^{\precsim}_{\regimeprof}$ coincides with the points $\rateprof\in\region^{\precsim}$ satisfying at least one of the linear equations
\[
\sum_{\commodity\in\costclass}\rate^{\commodity}=\bar\rate,
\]
where $\bar\rate$ is a break point in the single-commodity  game $\game_\costclass$.
\end{enumerate}
\end{theorem}

\begin{proof}
\eqref{it:thm:parallel-regions-a} 
Let $\loadprof$ be an equilibrium load vector of demand $\rateprof\in\region^\precsim$. We note that every commodity $\commodity\in\costclass$ allocates traffic only through resources in $\resources_\costclass$. 
Indeed, if a commodity $\commodity\in\costclass$ has a feasible resource also in $\resources_{\costclass'}$ with $\costclass'\ne\costclass$, then, because of \eqref{eq:R-C}, we have $\commodityalt\succ\commodity$ for every $\commodityalt\in\costclass'$, which is equivalent to $\eqcost^{\commodityalt}(\rateprof)>\eqcost^{\commodity}(\rateprof)$,  because $\rateprof\in\region^\precsim$. 
For this reason, all the commodities $\commodity\in\costclass$ have the same equilibrium cost $\eqcost^{\commodity}(\rateprof)=:\eqcost_{\costclass}(\rateprof)$, which implies that for every $\resource,\resourcealt\in\resources_{\costclass}$ we have
\begin{equation*}
\load_\resource>0\quad\implies
\quad\cost_\resource(\load_\resource)= \eqcost_{\costclass}(\rateprof)\le\cost_\resourcealt(\load_\resourcealt).
\end{equation*}
Since $\sum_{\resource\in\resources_{\costclass}}\load_\resource=\sum_{\commodity\in\costclass}\rate^{\commodity}=\rate_\costclass$, the vector $\bar\loadprof=(\load_\resource)_{\resource\in\resources_{\costclass}}$ is a single-commodity equilibrium for $\game_\costclass$ with demand $\rate_\costclass$. It follows that $\eqcost_{\costclass}(\rateprof)$ is in fact a function of the aggregate demand
$\rate_\costclass$ and so we can write it as
$\eqcost_{\costclass}(\rate_\costclass)$.
\medskip

\noindent
\eqref{it:thm:parallel-regions-b} 
By the result in \eqref{it:thm:parallel-regions-a}, for each $\resource\in\resources_\costclass$ and $\rateprof\in\region^\precsim$ the  equilibrium load $\load_\resource(\rateprof)$ coincides with the unique equilibrium in the single-commodity game $\game_{\costclass}$ with demand $\rate_{\costclass}$, and therefore it is a function of the aggregate demand $\rate_{\costclass}$. 
Now, according to \citep[proposition 3.12]{ComDosSca:MP2021}  every single-commodity game on a \ac{SP}  network has a nondecreasing selection of equilibria, so that $\load_\resource(\rateprof)$ is a nondecreasing function of $\rate_{\costclass}$. 
The equivalence with the comonotonicity of the maps $\rateprof\mapsto\load_\resource(\rateprof)$ for  $\resource\in\resources_{\costclass}$ throughout the region $\rateprof\in\region^\precsim$,
then follows from  a known result (see \eg \citet{Del:SPUS1970} and 
\citet{LanMei:AOR1994}).
Since we could not find a proof of this latter result in the literature, we include one  in \cref{lem:comonotonic} in \cref{sc:appendix:proofs}.
\medskip

\noindent
\eqref{it:thm:parallel-regions-c} Consider any demand $\rateprof\in\region^{\precsim}$.
By \eqref{it:thm:parallel-regions-a}, the equilibrium loads can be partitioned by cost classes $(\load_{\resource}(\rateprof))_{\resource\in\resources_{\costclass}}$, the latter being an equilibrium in the single-commodity game $\game_{\costclass}$. The equilibrium cost for $\game_{\costclass}$ is a strictly increasing function $\rate_{\costclass}\mapsto\eqcost_{\costclass}(\rate_{\costclass})$
of the aggregate demand 
$\rate_{\costclass}=\sum_{\commodity\in\costclass}\rate^{\commodity}$. It then follows that each load $\load_{\resource}(\rateprof)=\cost_{\resource}^{-1}(\eqcost_{\costclass}(\rate_{\costclass}))$ for $\resource\in\resources_{\costclass}$ is also a strictly increasing function of $\rate_{\costclass}$.

If $\rateprof\in\region^{\precsim}$ is on the boundary between two or more sub-regions $\region^{\precsim}_{\regimeprof}$,
the set of active resources changes locally at $\rateprof$ and then there must exist a cost class $\costclass$ whose set of active resources also changes locally at $\rate_{\costclass}$, which is therefore a break point in the single-commodity game $\game_{\costclass}$.
\end{proof}

\begin{example}
\label{ex:parallel-regions-and-sub-regions}

\begin{figure}[ht]
\hfil
\subfigure[]{\raisebox{16mm}{\scalebox{0.8}{
\begin{tikzpicture}
   \node[shape=circle,draw] (v1) at (-3,0) {$\source$}; 
   \node[shape=circle,draw] (v6) at (3,0)  {$\sink$};
   \draw[->] (v1) to [bend left=45] node[midway,fill=white] {$\load^2+1$} (v6);  
   \draw[->] (v1) to node[midway,fill=white] {$\load^2$} (v6);
   \draw[->] (v1) to [bend right=45] node[midway,fill=white] {$\load^2+2$} (v6);    
\end{tikzpicture}
}}}
\hfil
\subfigure[]
{
\scalebox{0.9}{\includegraphics[width=0.4\textwidth]{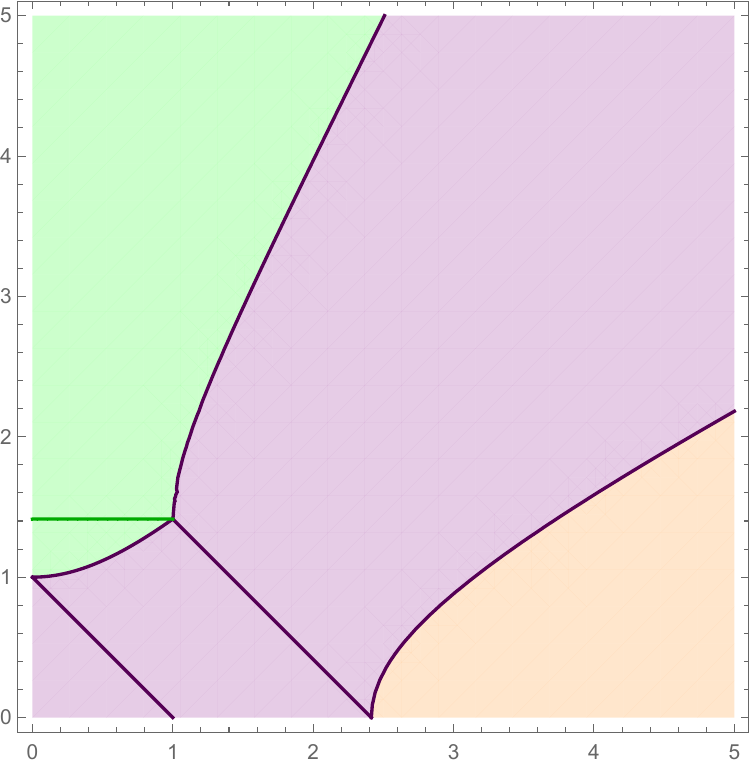}
}}
\caption{An example with quadratic costs. 
The first commodity uses the two top edges, and the second commodity uses the bottom two. 
The three colors represent the regions $\region^\precsim$ for the possible orders $\precsim$ of the equilibrium costs. The straight lines within each region separate  sub-regions corresponding to different active regimes. 
The regions $\region^{\precsim}$ are not convex, but the boundary between  sub-regions is still affine. 
}
\label{fig:parallel-BPR}
\end{figure}
Consider the variant of \cref{ex:parallel-regions-and-sub-regions-0} with quadratic costs as in \cref{fig:parallel-BPR}. 
The regions $\region^\precsim$ are no longer convex, but the sub-regions for  different active regimes are still delimited by the hyperplanes described in  \cref{thm:parallel-regions}\eqref{it:thm:parallel-regions-c}. 
In the purple region where $\eqcost^{\alpha}=\eqcost^{\beta}$ with a single cost class $\costclass=\{\alpha,\beta\}$ and $\resources_{\costclass}=\{\resource_{1},\resource_{2},\resource_3\}$,  the equilibrium loads of all the resources are strictly increasing functions of the total demand $\rate_{\costclass}=\rate^{\alpha}+\rate^{\beta}$, and the active regimes present
 break points at
$\rate_{\costclass}=1$ and $\rate_{\costclass}=1+\sqrt{2}$, that is, 
\begin{equation}
\label{eq:pho-alpha}
\begin{cases}
\regime^{\alpha}=\{\resource_{2}\}\text{ and }\regime^{\beta}=\{\resource_{2}\} & \text{if }~\rate_\costclass\in[0,1),\\
\regime^{\alpha}=\{\resource_{1},\resource_{2}\}\text{ and }\regime^{\beta}=\{\resource_{2}\} & \text{if }~ \rate_\costclass\in[1,1+\sqrt{2}),\\
\regime^{\alpha}=\{\resource_{1},\resource_{2}\} \text{ and } \regime^{\beta}=\{\resource_{2},\resource_3\} & \text{if }~
\rate_\costclass\in[ 1+\sqrt{2},\infty).    
\end{cases}    
\end{equation}

Similarly, in the green region where $\eqcost^{\alpha}<\eqcost^{\beta}$ with cost classes $\costclass_{1}=\{\alpha\}$ and $\costclass_{2}=\{\beta\}$, we have
$\regime^{\beta}=\{\resource_{2}\}$ for 
 $\rate^{\beta}<
\sqrt{2}$ and $\regime^{\beta}=\{\resource_{2},\resource_3\}$ for  $\rate^{\beta}\geq
\sqrt{2}$, whereas  $\regime^{\alpha}=\{\resource_{1}\}$ is constant.
\end{example}

\begin{remark}
\label{rem:comonotonicity_for_strictly_increasing_costs}
By \cref{rm:continuity-of-flows-when-costs-invertible}, having strictly increasing costs ensures the uniqueness of equilibria for $\game_\costclass$, as required in \cref{thm:parallel-regions}\eqref{it:thm:parallel-regions-b}. 
Actually, it suffices that no two resources in $\resources_\costclass$ have cost functions that are constant and equal on some (possibly different) non-degenerate intervals.
Moreover, for strictly increasing costs the equilibrium loads $\load_\resource(\rateprof)$ for $\resource\in\resources_\costclass$ and $\rateprof\in\region^{\precsim}$ are strictly increasing with $\rate_{\costclass}$. 
Indeed, since $\sum_{\resource\in\resources_{\costclass}}\load_\resource(\rateprof)=\rate_{\costclass}$, a strict increase of $\rate_{\costclass}$ implies that some load $\load_\resource(\rateprof)$ and its corresponding cost $\cost_\resource(\load_\resource(\rateprof))$ must strictly increase. 
However, across $\region^{\precsim}$ the equilibrium costs of all the resources $\resource\in\resources_{\costclass}$ remain equal, so that all their loads $\load_\resource(\rateprof)$ must   strictly increase simultaneously.
\end{remark}

\begin{remark}
\label{rem:comonotonicity_fails}
\cref{thm:parallel-regions}\eqref{it:thm:parallel-regions-b} implies that comonotonicity fails across different cost classes $\costclass\neq\costclass'$: if $\rate_{\costclass}$ increases and $\rate_{\costclass'}$ decreases, the  equilibrium loads of the resources $\resources_{\costclass}$ and $\resources_{\costclass'}$ will move in opposite directions. On the contrary, if both aggregate demands move in the same direction, the same holds for the corresponding equilibrium loads.
\end{remark}
\begin{remark}
The comonotonicity in \cref{thm:parallel-regions}\eqref{it:thm:parallel-regions-b} may  fail  when $\game_{\costclass}$ has multiple equilibria.
Consider for instance a variant of \cref{ex:parallel-regions-and-sub-regions-0} with costs $\cost_{1}(x)=\cost_3(x)=1$ and $\cost_{2}(x)=x$. 
When the demand is $\rateprof=(2,0)$ the  equilibrium sends 1 unit of flow through $\resource_{1}$ and $\resource_{2}$, and zero on $\resource_3$. 
Instead, at demand $\boldsymbol{\ratealt}=(0,2)$ nothing is sent through $\resource_{1}$, with 1 unit of traffic on both $\resource_{2}$ and $\resource_3$. 
Hence, despite the fact that at both $\rateprof$ and $\boldsymbol{\ratealt}$ all three resources have the same equilibrium cost equal to 1, the load on resource $\resource_{1}$ decreases when moving from  $\rateprof$ to $\boldsymbol{\ratealt}$, whereas the load on resource $\resource_3$ increases, so these loads are not comonotonic.
\cref{thm:parallel-regions}\eqref{it:thm:parallel-regions-b} does not apply here because the single-commodity game $\game_{\costclass}$ on the three resources and aggregate demand 2 has multiple equilibria.
\end{remark}

\begin{remark}
For single-commodity routing games on \ac{SP} networks the number of active regimes is at most the number of paths. 
This bound does not hold for multiple commodities and there can be as many 
as $\prod_{\commodity\in\commodities} \parens*{2^{\abs*{\resources^{\commodity}}}-1}$ potential combinations for $\activ{\resourcesprof}(\rateprof)$, (see \cref{App:number_of_active_regimes}). 
\end{remark}

%
% Section --------------------------------------
%

\section{Beyond Singleton Congestion Games}
\label{sc:beyond-singleton}

In this section we provide some monotonicity results that go beyond the class of singleton congestion games studied in \cref{sc:singleton} and that also extend some known theorems for routing games with single \ac{OD} pair. 
We recall that in a standard routing the commodities coincide with \ac{OD} pairs and, 
moreover, the feasible strategies for each \ac{OD} pair are all the possible paths connecting the corresponding origin and destination.

\citet[proposition 3.12]{ComDosSca:MP2021} proved that in a single-\ac{OD} routing game over a \acl{SP} network the equilibrium load of each edge is nondecreasing in the traffic demand.
Every network that is not \acl{SP} contains a Wheatstone subnetwork \citep[see][]{Mil:GEB2006}; therefore, as shown in \cref{ex:Wheatstone}, there exist costs for which the equilibrium loads of some edges are decreasing in some demand interval.
This implies that the \acl{SP} nature  of the network is the best topological assumption that guarantees monotonicity of the equilibrium loads in a single-\ac{OD} setting. 

Unfortunately, for multi-\ac{OD} routing games the network topology alone does not  provide a criterion for the monotonicity of equilibrium loads. 
To obtain some useful results, we consider the following class of constrained routing games.

\begin{definition}
\label{def:constrained-routing-games}   
A \acfi{CRG}\acused{CRG}  is a tuple $(\graph,\commodities,\costprof,\routesprof,\rateprof)$ where 
\begin{itemize}
\item  
$\graph=(\vertices,\edges)$ is a directed multigraph with vertex set $\vertices$ and edge set $\edges$,

\item
$\commodities$
is a finite family of commodities,

\item 
$\costprof=(\cost_\edge)_{\edge\in\edges}$ is a vector of edge cost functions,

\item 
$\routesprof=(\routes^{\commodity})_{\commodity\in\commodities}$, 
with $\routes^{\commodity}$ a nonempty set of paths between an origin 
$\source^{\commodity}\in\vertices$ and a destination $\sink^{\commodity}\in\vertices$,

\item 
$\rateprof=\parens*{\rate^{\commodity}}_{\commodity\in\commodities}$ is a demand vector.
\end{itemize}

\end{definition}

This defines a congestion game structure 
with resource set $\resources=\edges$,
commodity set $\commodities$, costs 
$\costprof=(\cost_\edge)_{\edge\in\edges}$, and strategy sets $\strategies^{\commodity}=\routes^{\commodity}$.
Notice that in a constrained routing game the  commodities are distinguished by their different strategy sets $\routes^{\commodity}$, although they might share the same \ac{OD} pair and may also have some  paths in common. 
This is in contrast with standard routing games where each \ac{OD} pair is identified as a single commodity and $\routes^{\commodity}$ includes all the paths from $\source^{\commodity}$ to $\sink^{\commodity}$.
All the examples in \cref{sc:parallel-regimes} are in fact constrained routing games. 

Although restricting the  paths to a subset might seem a minor detail, it is in fact a flexible feature that allows us to represent any congestion game as a constrained routing game. 
Furthermore, we can also turn this routing game into a \emph{common-\ac{OD}} where all commodities  have the same origin and destination, by
\begin{itemize}
\item  
adding a super-source $\source$ connected to each  $\source^{\commodity}$ by a zero-cost edge $(\source,\source^{\commodity})$,
\item 
adding a super-sink 
$\sink$ connected by zero-cost edges $(\sink^{\commodity},\sink)$, and
\item 
appending the edges $(\source,\source^{\commodity})$ and
 $(\sink^{\commodity},\sink)$ to each path of commodity $\commodity$.
\end{itemize}

\noindent 
The next proposition shows  that every congestion game is equivalent 
to a common-\ac{OD} routing game over an extremely simple network, and all the complexity of the game is in fact encoded into the feasible sets of  paths.

Formally, two congestion game structures $\game$ and $\eqivgame\game$ are said to be \emph{equivalent} if there exist one-to-one correspondences $\commodity\leftrightarrow\eqivgame\commodity$ between their commodities and $\strategy\leftrightarrow\eqivgame \strategy$ between  strategies, such that for each  demand $\rateprof$ and each feasible flow $\flowprof$ of the first game, the flow  $\eqivgame\flowprof$ defined as $\eqivgame\flow_{\eqivgame\strategy}=\flow_{\strategy}$ is feasible in the second game and the strategy costs coincide $\eqivgame\cost_{\eqivgame\strategy}(\eqivgame\flowprof)=\cost_{\strategy}(\flowprof)$. 
In this case the equilibria of both games are also in one-to-one correspondence.

\begin{proposition}
\label{prop:congestion-to-routing}
Every congestion game is equivalent to a common-\ac{OD} constrained routing game over a \ac{SP} network.
\end{proposition}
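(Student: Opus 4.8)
The plan is to realize each resource as an edge, lay the resource-edges out in series along one fixed linear order, and place a parallel zero-cost \emph{bypass} alongside each of them, so that an arbitrary unordered subset of resources can be traced out as a genuine directed path. The bypass gadget is exactly what lets an unordered strategy become an ordered walk.

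Concretely, I would fix an enumeration $\resources=\{\resource_{1},\dots,\resource_{\nres}\}$ and build a directed multigraph $\graph$ on vertices $\vertex_{0},\vertex_{1},\dots,\vertex_{\nres}$. Between consecutive vertices $\vertex_{\irun-1}$ and $\vertex_{\irun}$ I place two parallel edges: a \emph{resource-edge} carrying the cost function $\cost_{\resource_{\irun}}$, and a \emph{bypass edge} of identically zero cost. Each gadget between $\vertex_{\irun-1}$ and $\vertex_{\irun}$ is the parallel composition of two single edges, and $\graph$ is the series composition of these $\nres$ gadgets; hence $\graph$ is series-parallel by construction, with common source $\source=\vertex_{0}$ and sink $\sink=\vertex_{\nres}$. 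Since the vertices are visited in the fixed order $\vertex_{0},\dots,\vertex_{\nres}$, every $\source$–$\sink$ route is automatically a simple directed path, with no cycle issues.

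The key structural observation is that the $\source$–$\sink$ paths are in bijection with $2^{\resources}$: a path must, in each gadget $\irun$, select either the resource-edge or the bypass, and the collection of selected resource-edges recovers a subset $\strategy\subseteq\resources$. Writing $\route(\strategy)$ for the path that takes the resource-edge of gadget $\irun$ exactly when $\resource_{\irun}\in\strategy$ and the bypass otherwise, the map $\strategy\mapsto\route(\strategy)$ is the promised bijection (the empty strategy maps to the all-bypass path of zero cost). I then define the constrained routing game $\eqivgame\game$ with the same commodity set $\commodities$, common OD pair $(\source,\sink)$, and feasible path sets $\routes^{\commodity}\coloneqq\{\route(\strategy):\strategy\in\strategies^{\commodity}\}$, which is nonempty whenever $\strategies^{\commodity}$ is.

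It remains to verify equivalence in the sense of the definition preceding the statement. I take the commodity correspondence to be the identity $\commodity\leftrightarrow\commodity$ and the strategy correspondence to be $\strategy\leftrightarrow\route(\strategy)$. Transporting a flow by $\eqivgame\flow_{\route(\strategy)}^{\commodity}=\flow_{\strategy}^{\commodity}$ preserves nonnegativity and the demand constraints \eqref{eq:feasible-flows-per-commodity}, so feasible flows correspond bijectively. For the loads, the aggregate flow on the resource-edge of gadget $\irun$ equals the total flow on all feasible paths traversing it, namely $\sum_{\commodity\in\commodities}\sum_{\strategy\in\strategies^{\commodity}}\flow_{\strategy}^{\commodity}\mathds{1}_{\{\resource_{\irun}\in\strategy\}}$, which is precisely $\load_{\resource_{\irun}}$ from \eqref{eq:edge-loads}; bypass edges carry zero cost and contribute nothing. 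Consequently the cost of the path $\route(\strategy)$ is $\sum_{\resource\in\strategy}\cost_{\resource}(\load_{\resource})=\cost_{\strategy}(\flowprof)$, so strategy costs coincide and $\game$ and $\eqivgame\game$ are equivalent. The only genuine subtlety is the ordered-versus-unordered mismatch, which the bypass gadget resolves at no cost; the remaining steps are bookkeeping, with the load identity on each resource-edge being the one point to state carefully, and it follows immediately from \eqref{eq:edge-loads}.
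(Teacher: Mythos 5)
Your proposal is correct and follows essentially the same construction as the paper's proof: a series chain of two-edge parallel gadgets, one per resource, where each gadget offers the resource-cost edge or a zero-cost bypass, with strategies mapped to the paths selecting exactly their resources. Your additional verification of the flow/load/cost correspondence is just an explicit spelling-out of what the paper leaves as immediate bookkeeping.
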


\begin{proof}
Consider a congestion game structure with resources ${\resources}=\{\resource_{1},\dots,\resource_m\}$.
Consider the \ac{SP} network in the figure below,
\begin{figure}[ht]
\vspace{2ex}
\centering
\begin{tikzpicture}
\scalebox{1.1}{
    \node[shape=circle,draw=black,line width=.7pt] (v1) at (0,0)  {$\source$};
    \node[shape=circle,draw=black,line width=.7pt] (v2) at (2*1.3,0)  { }; 
    \node[shape=circle,draw=black,line width=.7pt] (v3) at (4*1.3,0)  { }; 
    \node[shape=circle,draw=black,line width=.7pt] (v4) at (7.0*1.3,0)  { }; 
    \node[shape=circle,draw=black,line width=.7pt] (v5) at (9*1.3,0)  {$\sink$ }; 

    \draw[line width=.7pt,->] (v1) to   node[midway,fill=white] {$\cost_{\resource_{1}}(\load)$} (v2);
    \draw[line width=.7pt,->] (v2) to   node[midway,fill=white] {$\cost_{\resource_{2}}(\load)$} (v3);
    \draw[line width=.7pt,->] (v1) to [bend right=60]  node[midway,fill=white] {$0$} (v2);
    \draw[line width=.7pt,->] (v2) to [bend right=60]  node[midway,fill=white] {$0$} (v3);
    \draw[line width=.7pt,->] (v3) to node[midway,fill=white] {$\dots$} (v4);
    \draw[line width=.7pt,->] (v3) to [bend right=45]  node[midway,fill=white] {$\dots$} (v4);
    \draw[line width=.7pt,->] (v4) to   node[midway,fill=white] {$\cost_{\resource_m}(\load)$} (v5);
    \draw[line width=.7pt,->] (v4) to [bend right=60]  node[midway,fill=white] {$0$} (v5);
   }
\end{tikzpicture}
\label{fig:congestion-to-network}
\end{figure}
where each resource is represented by two parallel edges: one of them has the original resource cost $\cost_\resource(\argdot)$, and the other edge provides a bypass with zero cost.
Any strategy $\strategy\subseteq {\resources}$ can be represented as a path joining $\source$ to $\sink$ that takes the top edge for each  resource in $\strategy$, and  
the bypass otherwise. 
We can then represent the commodities of the congestion game in the routing game by prescribing that they all have the same 
origin $\source$ and same destination $\sink$, whereas the feasible paths correspond to their feasible strategies in the original congestion game.
\end{proof}

Regarding the previous result, one may naturally ask whether a given nonatomic congestion game is equivalent to an \emph{unconstrained} nonatomic routing game. 
We are not aware of any result on this question, apart from the somewhat related result by \citet{Mil:IJGT2013}, who showed that every finite game can be represented as a \emph{weighted} atomic routing game.

As mentioned above, for standard multi-commodity routing games a \ac{SP} network topology does not suffice to guarantee the monotonicity of the equilibrium loads. 
Indeed, \cref{ex:Fisk-embedding,ex:braess-in-constrained-SP-form} below show that there exist common-\ac{OD} constrained routing games such that: 
\begin{itemize}
\item $\graph$ is \ac{SP};

\item every commodity uses paths $\routes^\commodity$ that form a \ac{SP} subnetwork;

\item the equilibrium loads $\loadprof(\rateprof)$ are unique; but

\item the map $\rateprof\mapsto\loadprof(\rateprof)$ is not a \ac{MES}.
\end{itemize}

\begin{example}\label{ex:Fisk-embedding}
Consider Fisk's network in \cref{subfig:Fisk} with $\cost_{\edge_{1}}(\load)=\cost_{\edge_{2}}(\load)=\load$, $\cost_{\edge{3}}(\load)=\load+90$, as in \cref{ex:Fisk}, and add bypass edges $\edge_{4},\edge_{5}$, as in \cref{subfig:Fisk-embedded}, with $\cost_{\edge_{4}}(\load)=\cost_{\edge_{5}}(\load)=0$, producing commodities $\commodity_{1}$, $\commodity_{2}$, $\commodity_3$ where $\source^\commodity=a$, $\sink^\commodity=c$ for every commodity $\commodity$, and  $\routes^{\commodity_{1}}=\{(\edge_{1},\edge_5)\}$, 
$\routes^{\commodity_{2}}=\{(\edge_4,\edge_{2})\}$, and $\routes^{\commodity_3}=\{(\edge_{1},\edge_{2}),\edge_3\}$. 
This defines an equivalent common-OD constrained routing game. 
As noted in \cref{ex:Fisk}, an increment in the demand of $\commodity_{1}$ 
pushes commodity $\commodity_3$ to divert more flow towards the direct path $\edge_3$, thus reducing the load 
on $\edge_{2}$  \citep[see][]{Fis:TRB1979}. 
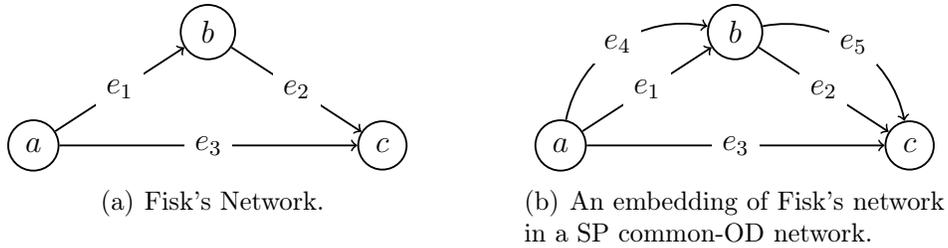
\begin{figure}[ht]
\centering
\setcounter{subfigure}{0}
\subfigure[Fisk's Network.]
{
\begin{tikzpicture}[scale=0.58]
    \node (e1) at (-1.25,2.2) {} ;
    \node (e2) at (1.25,2.2) {} ;
    \node (e3) at (2.2,0.3) {} ;
    \node[shape=circle,draw=black,line width=.7pt] (v1) at (-4,0)  {$a$}; 
   \node[shape=circle,draw=black,line width=.7pt] (v2) at (0,2.6)  { $b$}; 
   \node[shape=circle,draw=black,line width=.7pt] (v6) at (4,0)  { $c$}; 
   \draw[line width=.7pt,->] (v1) to   node[midway,fill=white] {$\edge_{1}$} (v2);
   \draw[line width=.7pt,->] (v1) to   node[midway,fill=white] {$\edge_{3}$} (v6);
   \draw[line width=.7pt,->] (v2) to   node[midway,fill=white] {$\edge_{2}$} (v6);
\end{tikzpicture}
\label{subfig:Fisk}
}
\hspace{1cm}
\subfigure[An embedding of Fisk's network in a \ac{SP} common-\ac{OD} network.]
{
\begin{tikzpicture}[scale=0.58]
    \node[shape=circle,draw=black,line width=.7pt] (v1) at (-4,0)  {$a$};
    \node (e1) at (-1.25,2.2)  {};
    \node (e2) at (1.25,2.2)  {};
    \node (e3) at (2.2,0.3)  {};
    \node (e4) at (-1.5,3.1)  {}; 
    \node (e5) at (1.5,3.1)  {};
   \node[shape=circle,draw=black,line width=.7pt] (v2) at (0,2.6)  { $b$}; 
   \node[shape=circle,draw=black,line width=.7pt] (v6) at (4,0)  { $c$}; 
   \draw[line width=.7pt,->] (v1) to   node[midway,fill=white] {$\edge_{1}$} (v2);
   \draw[line width=.7pt,->] (v1) to   node[midway,fill=white] {$\edge_{3}$} (v6);
   \draw[line width=.7pt,->] (v2) to   node[midway,fill=white] {$\edge_{2}$} (v6);
   \draw[line width=.7pt,->] (v1) to [bend left=45]  node[midway,fill=white] {$\edge_{4}$} (v2);
   \draw[line width=.7pt,->] (v2) to [bend left=45]  node[midway,fill=white] {$\edge_{5}$} (v6);
\end{tikzpicture}
\label{subfig:Fisk-embedded}}
\caption{Fisk's multi-commodity network can be embedded in a \ac{SP} network with a common-\ac{OD}, by adding two edges with zero cost.
}
\label{fig:Fisk-SP-embedding}
\end{figure}
\end{example}

\begin{example}\label{ex:braess-in-constrained-SP-form}
Monotonicity can also fail in a common-OD constrained routing game, even on a \ac{SP} graph. 
Indeed, the standard Braess's routing game in  \cref{fig:classic_braess} corresponds to the single-commodity congestion game structure 
$(\edges,\costprof,\{\source\,v _{1}\,v _{2}\sink, \source\,v _{1}\sink,\source\,v _{2}\sink\})$.
Using \cref{prop:congestion-to-routing} this is equivalent  to a  common-OD  constrained routing game on a \ac{SP}  network, for which the \ac{MES} property fails.
\end{example}

These examples show that, in addition to a \ac{SP} topology, we need to impose further conditions on how the commodities overlap. 
To this end we introduce the following operations of series and parallel connection of congestion game structures.

\begin{definition}
\label{def:CSP}
Let $\game_{1}=(\resources_{1},\costprof_{1},\strategiesprof_{1})$ and $\game_{2}=(\resources_{2},\costprof_{2},\strategiesprof_{2})$ be two congestion game structures with disjoint resource sets $\resources_{1}\cap\resources_{2}=\varnothing$. 
The series and parallel game structures are both defined on the resource set $\resources=\resources_{1}\cup\resources_{2}$
with their original cost functions. Specifically:
\begin{itemize}
\item 
The \emph{series} game structure $\game_{1} \times \game_{2}$ has commodities
 $(\commodity_1,\commodity_2)\in\commodities_{1}\times\commodities_{2}$, with corresponding 
 strategy set $\strategies^{(\commodity_1,\commodity_2)}=\{\strategy_{1}\cup \strategy_{2}: (\strategy_{1},\strategy_{2})\in\strategies_{1}^{\commodity_1}\times\strategies_{2}^{\commodity_2}\}$.
 
\item 
The \emph{parallel} game structure  $\game_{1}\cup \game_{2}$ has commodity set $\commodities=\commodities_{1}\cup\commodities_{2}$ and the original strategy sets $\strategies_{1}^{\commodity_1}$ for  $\commodity_1\in\commodities_{1}$ and $\strategies_{2}^{\commodity_2}$ for $\commodity_2\in\commodities_{2}$.

\item 
A \acfi{CSP}\acused{CSP} congestion game structure is constructed starting from singleton congestion game structures and applying a finite number of series or parallel connections to game structures already constructed.
\end{itemize}
\end{definition}

Whereas the parallel connection is a simple superposition of disjoint commodities, its combination with the series connection and the possibility of imposing constraints in the set of resources, provides a flexible tool to distinguish different types of commodities and to represent complex strategy sets that result from sequential processes.
Consider for instance a family of different job classes, each one representing a commodity $\commodity\in\commodities$, which must  be processed in a series of stages $k\in K$. 
At every stage there is a set of  machines $M_k$ that work in parallel to perform the given task, while the jobs of type $\commodity$ can only be processed in a subset $M_{k}^{\commodity}\subset M_k$.
A commodity $\commodity$ can then be identified with a particular sequence of feasible machines $(M^{\commodity}_k)_{k\in K}$, and its strategy set  corresponds to the strategy set for a connection in series  of singleton congestion game structures, one for each stage.
On the other hand, some job classes might not require some processing stages, which can be modeled as a bypass strategy  
using the parallel operation. 
As an illustration, the graph in \cref{fig:series-parallel-graph} can represent simultaneously commodities that must go 
sequentially over all four processing stages, possibly with restrictions on the machines that are allowed for each of them, as well as commodities that
only perform the first and fourth stages, and skip the two intermediate stages. 
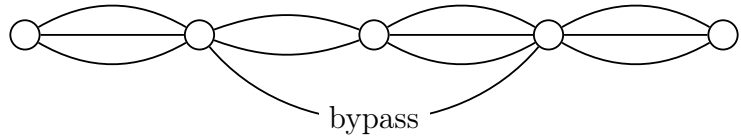
\begin{figure}[ht]
\centering
\begin{tikzpicture}[scale=0.58]
\node[shape=circle,draw=black,line width=.7pt] (a) at (-4,0)  {};
\node[shape=circle,draw=black,line width=.7pt] (b) at (0,0)  {}; 
\node[shape=circle,draw=black,line width=.7pt] (c) at (4,0)  {}; 
\node[shape=circle,draw=black,line width=.7pt] (d) at (8,0)  {}; 
\node[shape=circle,draw=black,line width=.7pt] (e) at (12,0)  {}; 
\draw[line width=.7pt] (a) to (b);
\draw[line width=.7pt] (a) to [bend left=30] (b);
\draw[line width=.7pt] (a) to [bend right=30]  (b);
\draw[line width=.7pt] (b) to [bend left=20] (c);
\draw[line width=.7pt] (b) to [bend right=20]  (c);
\draw[line width=.7pt] (c) to (d);
\draw[line width=.7pt] (c) to [bend left=30] (d);
\draw[line width=.7pt] (c) to [bend right=30]  (d);
\draw[line width=.7pt] (d) to (e);
\draw[line width=.7pt] (d) to [bend left=30] (e);
\draw[line width=.7pt] (d) to [bend right=30]  (e);  
\draw[line width=.7pt] (b) to [bend right=50] node[midway,fill=white] {bypass} (d);
\end{tikzpicture}
\caption{The \ac{SP} graph for a job-processing game.}
\label{fig:series-parallel-graph}
\end{figure}
The bypass strategy can also be
replaced by a series of alternative processing stages. 
Using such series and parallel operations one can model complex processing paths for different job classes. This construction gives rise to an \ac{SP} graph, however the main additional ingredient is which combinations of commodities are allowed along the construction.

\begin{theorem}
\label{pr:constrained-series-parallel}
Every \ac{CSP} congestion game structure has a \ac{MES}.
\end{theorem}

\begin{proof}
By induction and \cref{thm:singleton-congestion-games}, it suffices to show that the \ac{MES} property is preserved under series and parallel operations on game structures.
To this end, let $\game_{1}$ and $\game_{2}$ be two congestion game structures with \ac{MES}'s
$\rateprof_{1}\mapsto\loadprof_{1}(\rateprof_{1})$
and $\rateprof_{2}\mapsto\loadprof_{2}(\rateprof_{2})$ respectively. Then we prove the two parts:

\vspace{1ex}
\begin{enumerate}[(a)]
\item
\emph{The series game structure $\game_{1}\times \game_{2}$ has a \ac{MES}}. 
Let $\rateprof=(\rate^{(\commodity_1,\commodity_2)})_{\commodity_1\in\commodities_{1},\commodity_2\in\commodities_{2}}$ with $\rate^{(\commodity_1,\commodity_2)}$ the demand for the commodity  $(\commodity_1,\commodity_2)$ in a  game whose structure is $\game_{1}\times \game_{2}$. 

Define
\begin{equation}
\label{eq:demand-series}
\forall \commodity_1\in\commodities_{1}, \quad \rate_{1}^{\commodity_1}=\sum_{\commodity_2\in\commodities_{2}} \rate^{(\commodity_1,\commodity_2)};\quad \forall \commodity_2\in\commodities_{2}, \quad
\rate_{2}^{\commodity_2}=\sum_{\commodity_1 \in\commodities_{1}}\rate^{(\commodity_1,\commodity_2)},
\end{equation}
$\rateprof_{1} = \parens*{\rate_{1}^{\commodity_{1}}}_{\commodity_{1}\in\commodities_{1}}$, and $\rateprof_{2} = \parens*{\rate_{2}^{\commodity_{2}}}_{\commodity_{2}\in\commodities_{2}}$.
An equilibrium for $\rateprof$ can be obtained by superposing $\loadprof_{1}(\rateprof_{1})$ on the resources $\resources_{1}$ and $\loadprof_{2}(\rateprof_{2})$ on the resources $\resources_{2}$.
Since an increase of any demand $\rate^{(\commodity_1,\commodity_2)}$ induces an increase in the demands $\rate_{1}^{\commodity_1}$ and $\rate_{2}^{\commodity_2}$, the loads in $\loadprof_{1}(\rateprof_{1})$  and $\loadprof_{2}(\rateprof_{2})$ increase, so that this superposed equilibrium provides a \ac{MES} for the series game.

\item \emph{The union game structure $\game_{1}\cup \game_{2}$ has a \ac{MES}}.
For each demand $\rateprof=(\rateprof_{1},\rateprof_{2})$ in a game with structure $\game_{1}\cup \game_{2}$ we can directly find an equilibrium by superposing $\loadprof_{1}(\rateprof_{1})$ on the resources $\resources_{1}$ and $\loadprof_{2}(\rateprof_{2})$ on the resources $\resources_{2}$.
Since the loads in these two equilibria are monotone with respect to each individual demand, the same holds for their superposition which provides a  \ac{MES} for the parallel game structure. \qedhere
\end{enumerate}
\end{proof}

\begin{remark}
The common-OD constrained routing game of \cref{ex:Fisk-embedding}, in which Fisk's network is embedded, does not have a \ac{CSP} structure:
the strategy set of $\commodity_3$ cannot be obtained  as a strategy set of a previously present commodity when constructing a parallel game structure. 
For a different reason, the common-\ac{OD} \ac{SP} routing game in \cref{ex:braess-in-constrained-SP-form} does not have a \ac{CSP} game structure either. 
Indeed, its network would be made of 5 two-edge parallel network in series, each one associated to a resource, \ie an edge of the Wheatstone network. 
The classical Braess's routing game has a single commodity with three strategies, and a strategy set with cardinality 3 cannot be obtained as the Cartesian product of the strategy sets of the 5 two-edge parallel networks connected in series. 
A series of Pigou games has strong limitations, for example, a commodity with a number of paths divisible by a prime larger than 2 is not constructible as in \cref{def:CSP}. 

\end{remark}

\begin{remark}
It is somewhat odd that the class of \ac{CSP}'s does not include single-OD routing games over an \ac{SP} graph, in which there is a single commodity and every path is allowed. This happens because the parallel connection does not allow to merge
commodities and they are kept separated. 
However, if in the construction of \ac{CSP}'s instead of taking singleton congestion games as the initial atoms, one replaces each singleton strategy with a series-parallel routing game structure with a single commodity, then  \cref{pr:constrained-series-parallel} remains true for this larger class which trivially includes series-parallel routing games. 
\end{remark}

\begin{remark} Every \ac{CSP} game structure as defined above is a matroid game. 
In fact, if $\game_{1}$ and $ \game_{2}$ are two matroid game structures, then the parallel game structure $\game_{1}\cup \game_{2}$ is trivially a matroid game structure, whereas the strategy sets $\strategies^{(\commodity_1,\commodity_2)}$ in the series game structure $\game_{1}\times \game_{2}$ correspond to the direct sum operation on the original matroid bases $\strategies_{1}^{\commodity_1}$ and $\strategies_{2}^{\commodity_2}$.
As a consequence, for strictly increasing costs the previous result can also be derived from \citet[lemma~3.2]{FujGoeHarPeiZen:MOR2017}.    
\end{remark}

Whereas \ac{CSP} congestion game structures  were defined for general congestion games, they can also be described as constrained routing games with a specific structure. 
The following representation is also more natural compared to the one in \cref{prop:congestion-to-routing}.

\begin{theorem}\label{pr:constrained-series-parallel-routing-description}
Every \ac{CSP} congestion game structure $\game=(\resources,\costprof,\strategiesprof)$ is equivalent to a common-OD constrained routing game structure $(\graph,\costprof,\routesprof)$ such that
\begin{enumerate}[\upshape(i)]
\item 
\label{it:graph-series-parallel} 
the graph $\graph$ is \ac{SP},

\item \label{it:sequence-vertices} for each commodity $\commodity$ all the paths in $\routes^\commodity$ visit the same vertices in the same order,

\item 
\label{it:unsplittable-commodity}
for every two paths $\route_{1},\route_{2}\in\routes^\commodity$
and edges $\edge_{1}\in\route_{1}$ and $\edge_{2}\in\route_{2}$  connecting two subsequent vertices, the paths obtained from $\route_{1}$ and $\route_{2}$ by exchanging $\edge_{1}$ with $\edge_{2}$ also belong to $\routes^\commodity$.
\end{enumerate}
Furthermore, every common-OD constrained routing game satisfying \eqref{it:graph-series-parallel}, \eqref{it:sequence-vertices}, \eqref{it:unsplittable-commodity}, has a \ac{CSP} congestion game
structure.
\end{theorem}

\begin{proof}
Every \ac{CSP} game structure is built starting with singleton congestion games and applying a finite number of series or parallel operations. 
We start by noting that every singleton congestion game is equivalent to a constrained routing game on a parallel network with two vertices connected by edges corresponding to the resources of the singleton congestion game, and any such game satisfies \eqref{it:graph-series-parallel}, \eqref{it:sequence-vertices}, \eqref{it:unsplittable-commodity}. 
Hence, it suffices to show that these properties are preserved under series and parallel operations.

Consider two congestion game structures $\game_{1}=(\resources_{1},\costprof_{1},\strategiesprof_{1})$ and $\game_{2}=(\resources_{2},\costprof_{2},\strategiesprof_{2})$ which are  respectively equivalent to some common-OD constrained routing games $(\graph_{1},\costprof_{1},\routesprof_{1})$ and $(\graph_{2},\costprof_{2},\routesprof_{2})$ satisfying 
\eqref{it:graph-series-parallel}, \eqref{it:sequence-vertices}, \eqref{it:unsplittable-commodity}. 

The series game structure $\game_{1}\times \game_{2}$ is then equivalent to the constrained routing game structure $(\widetilde \graph,\widetilde \costprof, \widetilde \routesprof)$ where $\widetilde\graph$ is obtained by joining in series the graphs $\graph_{1}$ and $\graph_{2}$, the costs $\widetilde \costprof$ are the cost functions given by $\costprof_{1}$ and $\costprof_{2}$ on the corresponding edges, and the commodities are given by the sets of paths obtained choosing commodities $\commodity_{1}$ for $(\graph_{1},\costprof_{1},\routesprof_{1})$ and $\commodity_{2}$ for $(\graph_{2},\costprof_{2},\routesprof_{2})$, and joining every path in $\routes^{\commodity_{1}}$ with every path in $\routes^{\commodity_{2}}$ to construct paths in $\widetilde \graph$. 
Moreover, $(\widetilde \graph,\widetilde \costprof, \widetilde \routesprof)$ satisfies \eqref{it:graph-series-parallel}, \eqref{it:sequence-vertices}, \eqref{it:unsplittable-commodity} because  $(\graph_{1},\costprof_{1},\routesprof_{1})$ and $(\graph_{2},\costprof_{2},\routesprof_{2})$ do.

Similarly, the parallel game structure $\game_{1}\cup\game_{2}$ is equivalent to the constrained routing game structure $(\bar \graph,\bar \costprof, \bar \routesprof)$ where $\bar\graph$ is obtained joining in parallel $\graph_{1}$ and $\graph_{2}$, the costs $\bar \costprof$ are the cost functions given by $\costprof_{1}$ and $\costprof_{2}$ on the corresponding edges, and the commodities are given by the commodities of $(\graph_{1},\costprof_{1},\routesprof_{1})$ and $(\graph_{2},\costprof_{2},\routesprof_{2})$. 
Also in this case, the routing game $(\bar \graph,\bar \costprof, \bar \routesprof)$ satisfies \eqref{it:graph-series-parallel}, \eqref{it:sequence-vertices}, \eqref{it:unsplittable-commodity} because  $(\graph_{1},\costprof_{1},\routesprof_{1})$ and $(\graph_{2},\costprof_{2},\routesprof_{2})$ do.

This completes the proof of the first claim of the theorem. 

Conversely, notice that every \ac{SP} graph $\graph$ is constructed starting with parallel networks and joining them in series or in parallel for a finite number of times. 
Suppose that a common-\ac{OD} constrained routing game $(\graph,\costprof,\routesprof)$ structure satisfies \eqref{it:graph-series-parallel}, \eqref{it:sequence-vertices}, \eqref{it:unsplittable-commodity}.

If the graph $\graph$ is obtained by joining in series two graphs $\graph_{1}$ and $\graph_{2}$, we can endow them with cost functions which associate costs to edges as in $\costprof$. 
Furthermore, given a commodity $\commodity$ for $(\graph,\costprof,\routesprof)$ we can define commodities $\commodity_{1}$ on $\graph_{1}$ and $\commodity_{2}$ on $\graph_{2}$ by determining for $i=1,2$ the set of paths
\begin{equation*}
\routes^{\commodity_i}=\braces*{\route\text{ path in }\graph_i\text{ s.t. } \route\text{ is part of a path in }\routes^{\commodity}}.    
\end{equation*}
Since $(\graph,\costprof,\routesprof)$ satisfies \eqref{it:unsplittable-commodity}, we have $\routes^{\commodity}=\routes^{\commodity_{1}}\times\routes^{\commodity_{2}}$, so that $(\graph,\costprof,\routesprof)$ is the series game structure of the two constrained routing games just defined on $\graph_{1}$ and $\graph_{2}$.

If the graph $\graph$ is obtained by joining in parallel two graphs $\graph_{1}$ and $\graph_{2}$, then we can assume that the direct edges from the origin and the destination of $\graph$ are all contained in one of the two. 
We can again endow $\graph_{1}$ and $\graph_{2}$ with cost functions which associate costs to edges as in $\costprof$. 
Furthermore because of property \eqref{it:sequence-vertices}, for every commodity $\commodity$ of $(\graph,\costprof,\routesprof)$ the paths in $\routes^{\commodity}$ all belong to one between $\graph_{1}$ and $\graph_{2}$.
This allows us to define for each commodity of $(\graph,\costprof,\routesprof)$, a commodity either in $\graph_{1}$ or $\graph_{2}$, so that $(\graph,\costprof,\routesprof)$ is the parallel game structure of the two constrained routing games just defined on $\graph_{1}$ and $\graph_{2}$.
\end{proof}

\begin{remark}
Note that Braess's classical example in \cref{fig:classic_braess} satisfies \eqref{it:sequence-vertices} and  \eqref{it:unsplittable-commodity}, but does not satisfy \eqref{it:graph-series-parallel}. 
Fisk's network embedding of \cref{ex:Fisk-embedding} satisfies \eqref{it:graph-series-parallel} and \eqref{it:unsplittable-commodity} but does not satisfy  \eqref{it:sequence-vertices}. 
Finally, the constrained routing game of  \cref{ex:braess-in-constrained-SP-form}, obtained by embedding Braess's game in a \ac{SP} graph as in \cref{prop:congestion-to-routing}, satisfies \eqref{it:graph-series-parallel} and \eqref{it:sequence-vertices}, but not \eqref{it:unsplittable-commodity}.
 
\end{remark}

\begin{remark}
Conditions
\eqref{it:graph-series-parallel}, \eqref{it:sequence-vertices}, \eqref{it:unsplittable-commodity}
in \cref{pr:constrained-series-parallel-routing-description}
can be  equivalently stated by requiring that all feasible paths for a commodity $\commodity$ visit a specific ordered sequence of nodes; between successive nodes only a specific subset of parallel edges are allowed; and $\routes^\commodity$ includes all possible paths in this subnetwork. 
Still another equivalent description is to require that 
 for any two paths 
 $\route_{1},\route_{2}\in\routes^\commodity$ the mixed path where we follow $\route_{1}$ up to an intermediate node and then continue with $\route_{2}$ is also in $\routes^\commodity$. 
\end{remark}

%
% Section --------------------------------------
%

\section{Summary and open problems}
\label{sc:summary}

This paper studied the monotonicity of  equilibrium travel times and equilibrium  loads in response to variations of the demands, identifying  conditions under which  the  paradoxical phenomena of non-monotonicity cannot happen.
We considered the general setting of congestion games,
with a special focus on singleton congestion games with multiple commodities for which we 
established in \cref{thm:singleton-congestion-games} the existence of a selection of the equilibrium loads which monotonically increase with respect to the demand of every commodity.

We next explored the notion of comonotonicity, which captures the idea that different resource loads jointly increase or decrease after variations of the demands.
\cref{thm:parallel-regions} described how comonotonicity is connected to the structure of equilibria in terms of how the commodities are ranked by cost and how the resources become active or inactive as the demands vary. We complemented this finding by a structural result on the regions of the demand space for which the same sets of resources are used at equilibrium. 

\cref{pr:constrained-series-parallel} extended the study of monotonicity from singleton congestion games to the larger class of congestion games having a \ac{CSP} structure, reminiscent of the concept of a \ac{SP} network.
We also derived an embedding that maps congestion games into constrained routing games (see \cref{prop:congestion-to-routing}) and characterized the classes of congestion games with good monotonicity properties by embedding them into routing games (see \cref{pr:constrained-series-parallel-routing-description}).
This last result sheds light on the features that produce the paradoxes and showcases the difference between  single and multiple \ac{OD} networks.  
When the network has a single \ac{OD} pair, its topology is the sole relevant factor to guarantee the monotonicity of equilibrium loads.
In the multiple \ac{OD} case the structure of the  paths that are in  each \ac{OD} pair also plays a crucial role.

A first open question not addressed in this paper, and which will be interesting to explore,
is how the structural results on the regions $\region^{\precsim}$ and sub-regions $\region^{\precsim}_{\regime}$ for the different active regimes might be exploited to devise an algorithm for building a curve of equilibria along a demand curve, analog to the path-following method for  piece-wise affine costs developed by \citet{KliWar:MOR2022}. A basic question here is to investigate the
geometry of the
regions $\region^{\precsim}$
for specific classes of cost functions.  For the special case  of \ac{BPR} costs, we conjecture that the boundaries between these regions are asymptotic to straight lines through the origin. 
This would imply that when the demands are scaled proportionally, the regimes will not repeat and  the curve will eventually enter into a particular asymptotic region
$\region^{\precsim}$ and remain there
forever. The latter could inspire a  path following algorithm to build a curve of equilibria.

A second open problem is to find an algorithm to recognize \ac{CSP} congestion game structures.
In this regard, one could be tempted to use the equivalent game in \cref{prop:congestion-to-routing} for 
which \eqref{it:graph-series-parallel} and \eqref{it:sequence-vertices} in \cref{pr:constrained-series-parallel-routing-description}
hold trivially, so that only \eqref{it:unsplittable-commodity} would need to be checked. 
Unfortunately, the \ac{CSP} property is not preserved under  equivalence: for instance, a singleton congestion game with only one commodity is  \ac{CSP}  by definition, but its equivalent representation in \cref{prop:congestion-to-routing} is not because property \eqref{it:unsplittable-commodity} fails.
This suggests that recognizing \ac{CSP} game structures is not straightforward. 
As a possible starting point
to address this question, one might try to adapt the existing algorithms for recognizing \ac{SP} networks \citep[see][]{ValTarLaw:SIAMJC1982,HeYes:IC1987,Epp:IC1992}.

\subsection*{Acknowledgments}
We thank the reviewers for their careful reading and  insightful comments. 
We also thank our colleague Tobias Harks for pointing out the connection of our results with the paper by \citet{FujGoeHarPeiZen:MOR2017}, and Tzachi Gilboa and Ludger R\"uschendorf for some historical insights about comonotonicity.
Valerio Dose and Marco Scarsini are members of GNAMPA-INdAM. 
Their work was partially supported by the GNAMPA project CUP\_E53C22001930001  ``Limiting behavior of stochastic dynamics in the Schelling segregation model'' and by the MIUR PRIN project 2022EKNE5K   ``Learning in Markets and Society.''
Roberto Cominetti's research was supported by Proyecto Anillo ANID/PIA/ACT192094.

\appendix

\gdef\thesection{\Alph{section}} % corrected redefinition of "\thesection"
\makeatletter
\renewcommand\@seccntformat[1]{\appendixname\ \csname the#1\endcsname.\hspace{0.5em}}
\makeatother

%
% Section --------------------------------------
%

\section{Supplementary proofs}
\label{sc:appendix:proofs}

\subsection{Missing proof}
\begin{proof}[Proof of \cref{pr:continuity}]
Let $(\resources,\costprof,\strategies)$ be a nonatomic congestion game structure. 
For every demand $\rateprof\in\Rpos^{\commodities}$, let $\valueW(\rateprof)$ be the minimum value of the Beckmann potential as in \eqref{eq:Beckmann}, that is,
\begin{equation}
\label{eq:min-Beckmann}
\valueW(\rateprof)=\min_{\loadprof\in\loads_{\rateprof}}\sum_{\resource\in\resources}\Cost_\resource(\load_\resource).
\end{equation}
We obtain the result as a consequence of convex duality. Consider the function
$\perturb_{\rateprof}: \reals^{\strategiesprof}\times \reals^{\commodities}\to\reals\cup\{+\infty\}$ given by
\begin{equation}
\label{eq:perturb}
\perturb_{\rateprof}(\flowprof,\zvarprof)
=
\begin{cases}
\sum_{\resource\in\resources}\Cost_{\resource}\parens*{\sum_{\strategyalt\ni\resource}\flow_{\strategyalt}} & \text{if } \flowprof\geq \zerovec,\ \sum_{\strategy\in\strategies^{\commodity}}\flow_{\strategy}=\rate^\commodity+\zvar^{\commodity}\text{ for every }\commodity\in\commodities,\\
+\infty& \text{otherwise},
\end{cases}
\end{equation}
which is a proper closed convex function.
Letting $\vinf_{\rateprof}(\zvarprof)$ denote the optimal value function of the primal problem 
\begin{equation}
\label{eq:primal}
(\prim_{\rateprof})\quad\inf_{\flowprof}\perturb_{\rateprof}(\flowprof,\zvarprof),
\end{equation}
we have
$\valueW(\rateprof+\zvarprof)=\vinf_{\rateprof}(\zvarprof)$ and, in particular, $\valueW(\rateprof)=\vinf_{\rateprof}(\zerovec)$.

Since $\perturb_{\rateprof}$ is convex, we have that $\zvarprof\mapsto \vinf_{\rateprof}(\zvarprof)=\valueW(\rateprof+\zvarprof)$ is also convex,  from which we deduce that $\rateprof\rightarrow\valueW(\rateprof)$ is convex.
Moreover, the perturbed function $\perturb_{\rateprof}$ yields a corresponding dual 
\begin{equation}
\label{eq:dual}
(\dual_{\rateprof})\quad\min_{\eqcost\in\R^{\commodities}}\perturb_{\rateprof}^{*}(\zerovec,\eqcostprof),
\end{equation}
where $\perturb_{\rateprof}^{*}$ is the Fenchel conjugate function, that is,
\begin{equation}
\label{eq:phi-star}
\begin{split}
\perturb_{\rateprof}^{*}(\zerovec,\eqcostprof)
&=\sup_{\flowprof,\zvarprof}~\langle \zerovec,\flowprof\rangle+\langle\eqcostprof, \zvarprof\rangle-\perturb_{\rateprof}(\flowprof,\zvarprof)\\
&=\sup_{\flowprof\geq \zerovec} \sum_{\commodity\in\commodities}\parens*{\eqcost^{\commodity}\parens*{\sum_{\strategy\in\strategies^{\commodity}} \flow_{\strategy}-\rate^\commodity}}-\sum_{\resource\in\resources}\Cost_{\resource}\parens*{\sum_{\strategyalt\ni\resource}\flow_{\strategyalt}}.
\end{split}
\end{equation}

Since $\valueW(\rateprofalt)$ is finite for all $\rateprofalt\in \Rpos^{\commodities}$, it follows that $\vinf_{\rateprof}(\zvarprof)=\valueW(\rateprof+\zvarprof)$ is finite for $\zvarprof$ in  some interval around $\zerovec$,
and then the convex duality theorem implies that there is no duality gap and the subgradient $\nabla \vinf_{\rateprof}(\zerovec)$ at $\zvarprof=\zerovec$ coincides with the optimal solution 
set $\solset(\dual_{\rateprof})$ of the dual problem, that is, $\nabla \valueW(\rateprof)=\nabla \vinf_{\rateprof}(0)=\solset(\dual_{\rateprof})$. 

We claim that the dual problem has a unique solution, which is exactly the vector of equilibrium costs $\eqcost(\rate)$. 
Indeed,  fix an optimal solution $\eq{\flow}$ for $\vinf_{\rate}(0)
=\valueW(\rate)$ and recall that this is just a Wardrop equilibrium. 
The dual optimal solutions  are precisely the $\eqcost$'s in $\R^{\commodities}$ such that 
\begin{equation*}
\perturb_{\rate}(\eq{\flow},0)+\perturb_{\rate}^{*}(0,\eqcost)=0.
\end{equation*}
This equation can be written explicitly as
\begin{equation*}
\sum_{\resource\in\resources}\Cost_{\resource}\parens*{\sum_{\strategy\ni\resource}\eq{\flow}_{\strategy}}+\sup_{\flow\geq 0}\sum_{\commodity\in\commodities}\left(\eqcost^\commodity\parens*{\sum_{\strategy\in\strategies^{\commodity}} \flow_{\strategy}-\rate^\commodity}\right)-\sum_{\resource\in\resources}\Cost_{\resource}\parens*{\sum_{\strategyalt\ni\resource}\flow_{\strategyalt}}=0,
\end{equation*}
from which it follows that $\flow=\eq{\flow}$ is an optimal solution in the  latter supremum. The corresponding optimality conditions are
\begin{align*}
\eqcost^\commodity-\sum_{\resource\in\strategy}\cost_{\resource}\parens*{\sum_{\strategyalt\ni\resource}
\eq{\flow}_{\strategyalt}}=0,&\quad\text{ if }\eq{\flow}_{\strategy}>0,\commodity\in\commodities, \strategy\in\strategies^{\commodity},\\
\eqcost^\commodity-\sum_{\resource\in\strategy}\cost_{\resource}\parens*{\sum_{\strategyalt\ni\resource}
\eq{\flow}_{\strategyalt}}\leq 0,&\quad\text{ if }\eq{\flow}_{\strategy}=0,\commodity\in\commodities, \strategy\in\strategies^{\commodity},
\end{align*}
which imply that  $\eqcost^\commodity$ is the equilibrium cost of the \ac{OD} pair $\commodity$ for the Wardrop equilibrium, that is, $\eqcost^\commodity=\eqcost^{\commodity}(\rateprof)$ for every $\commodity\in\commodities$. 
It follows that the subgradient $\nabla \valueW(\rateprof)=\{\eqcost(\rateprof)\}$ so that $\rateprof\mapsto \valueW(\rateprof)$ is not only
convex but also differentiable with gradient $\nabla\valueW(\rateprof)=\eqcost(\rateprof)$. 
The conclusion follows by noting that every convex differentiable function is automatically of class $C^{1}$ and its gradient is monotone, in the sense that $\langle \nabla V(\rateprof_{1})- \nabla V(\rateprof_{2}),\rateprof_{1}-\rateprof_{2}\rangle\ge0$ for every $\rateprof_{1},\rateprof_{2}\in\Rpos^{\commodities}$, which in particular implies that $\eqcost^{\commodity}$ is nondecreasing in the variable $\rate^{\commodity}$.

The continuity of the equilibrium resource costs $\eqcostedge_{\resource}=\eqcostedge_{\resource}(\rateprof)$ is a consequence of Berge's maximum theorem \citep[see, \eg][Section~17.5]{AliBor:Springer2006}.
Indeed, as explained in \citet{Fuk:TRB1984}, the equilibrium resource costs are optimal solutions for the strictly convex dual program \eqref{eq:Fukushima}. 
Hence, since the objective function is jointly continuous in $(\eqcostedgeprof,\rateprof)$, Berge's theorem implies that the optimal solution correspondence is upper-semicontinous.
However, in this case the optimal solution is unique, so that the optimal correspondence is single-valued, and, as a consequence, the equilibrium resource costs $\eqcostedge_{\resource}(\rateprof)$ are continuous.
\end{proof}

\begin{remark}\label{rem:dual} A similar analysis where we reformulate the primal problem by including resource load variables $\load_\resource$ and considering perturbations in the flow balance equations $\load_{\resource}=\sum_{\strategy\ni\resource}\flow_{\strategy}+z_\resource$,
yields the dual problem \eqref{eq:Fukushima}, which characterizes the equilibrium costs $\eqcostedge_\resource$.

Perhaps a more direct argument is as follows.
Let us rewrite the flow balance equations 
$\eq\load_{\resource}=\sum_{\strategy\ni\resource}\eq\flow_{\strategy}$  in vector form as
$\eq\loadprof=\sum_{\strategy\in\strategies}\eq\flow_{\strategy}\canbresprof^{\strategy}$
where $\canbresprof^{\strategy}=(\canbres^{\strategy}_{\resource})_{\resource\in\resources}$ denotes the indicator vector 
with 
\begin{equation}
\label{eq:eta-r-s}
\canbres_{\resource}^{\strategy} = \mathds{1}_{\braces{\resource\in\strategy}}.   
\end{equation}
Since
$\rate_{\commodity}=\sum_{\strategy\in\strategies^{\commodity}}\eq\flow_{\strategy}$, by letting $\eq\alpha^{\commodity}_{\strategy}=\eq\flow_{\strategy}/\rate^{\commodity}$
for all $\strategy\in\strategies^{\commodity}$ we have that  $\sum_{\strategy\in\strategies^{\commodity}}\alpha^{\commodity}_{\strategy}=1$ and $\alpha^{\commodity}_{\strategy}\geq 0$, the latter inequality being strict only for the optimal strategies for commodity $\commodity$.
With these notations, we can write
\begin{equation}\label{eq:dualeq}
\eq\loadprof=\sum_{\strategy\in\strategies}\eq\flow_{\strategy}\canbresprof^{\strategy}=\sum_{\commodity\in\commodities}\rate_{\commodity}\sum_{\strategy\in\strategies^{\commodity}}\alpha^{\commodity}_{\strategy}\canbresprof^{\strategy}.
\end{equation}
Now, for each $\commodity\in\commodities$ the super-differential of the \emph{concave} function $\Theta_{\commodity}(\eqcostedgeprof)
\coloneqq \min_{\strategy\in\strategies^{\commodity}}\sum_{\resource\in\strategy}\eqcostedge_{\resource}$ is given by convex hull of the indicators of optimal strategies, that is,
\begin{equation}
\label{eq:partial-Theta}    
\partial\Theta_{\commodity}(\eqcostedgeprof)
=
\co\braces*{\canbresprof^{\strategy}: \strategy\in\strategies^{\commodity}, ~\sum_{\resource\in\strategy}\eqcostedge_{\strategy}=\Theta_{\commodity}(\eqcostedgeprof)},
\end{equation}
so that from \eqref{eq:dualeq} we derive
\begin{equation}\label{eq:dualeq2}
\eq\loadprof
=\sum_{\commodity\in\commodities}\rate_{\commodity}\sum_{\strategy\in\strategies^{\commodity}}\alpha^{\commodity}_{\strategy}\canbresprof^{\strategy}\in\sum_{\commodity\in\commodities}\rate_{\commodity}\,\partial\Theta(\eqcostedgeprof).
\end{equation}
Finally, letting $\Phi(\loadprof) \coloneqq \sum_{\resource\in\resources}\Cost_\resource(\load_\resource)    $ and
$\eqcostedge_{\resource}=\cost_{\resource}(\eq\load_{\resource})$ we clearly have  
$\eqcostedgeprof=\nabla\Phi(\eq\loadprof)$, which is equivalent to 
$\eq\loadprof\in\partial\Phi^*(\eqcostedgeprof)$ where the Fenchel's conjugate is given by $\Phi^*(\eqcostedgeprof)=\sum_{\resource\in\resources}\Cost^*_\resource(\eqcostedge_\resource)$. Since all the involved functions are finite and continuous, using
standard subdifferential calculus rules, \eqref{eq:dualeq2} is equivalent to $0\in\partial\Psi(\eqcostedgeprof)$ for the convex function
$\Psi(\eqcostedgeprof)=\Phi^*(\eqcostedgeprof)-\sum_{\commodity\in\commodities}\rate_{\commodity}\Theta_{\commodity}(\eqcostedgeprof)$ which is precisely the objective function in \eqref{eq:Fukushima}.
\end{remark}

\subsection{Characterization of comonotonicity}
For the sake of completeness we include the following  characterization of comonotonicity. This is a folk result (see \eg \citet{LanMei:AOR1994}), but its proof is not easy to find in the literature.

\begin{lemma}\label{lem:comonotonic}
Consider a finite family of functions $\comonf_i:\Omega\to\reals$ for $i=1,\ldots,m$ and let $s(\omega) \coloneqq \sum_{i=1}^{m} \comonf_i(\omega)$.
Then, the family $\{\comonf_i:i=1,\ldots,m\}$ is comonotonic if and only if there exist nondecreasing functions $F_i:\reals\to\reals$ such that
$\comonf_i(\omega)=F_i(s(\omega))$ for all $\omega\in\Omega$ and $i=1,\ldots,m$.
\end{lemma}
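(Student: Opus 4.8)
The plan is to prove both implications, concentrating the real content in the necessity direction. Throughout I write $s(\omega)=\sum_{i=1}^{m}\comonf_i(\omega)$ for the pointwise sum and, for a fixed pair $\omega_1,\omega_2\in\Omega$, I abbreviate $d_i\coloneqq\comonf_i(\omega_1)-\comonf_i(\omega_2)$. The single elementary observation driving everything is that the comonotonicity condition \eqref{eq:comonotonic} says exactly $d_id_j\ge0$ for all $i,j$, which is equivalent to the statement that $d_1,\dots,d_m$ share a common sign (none strictly positive while another is strictly negative). At the same time $\sum_{i=1}^{m}d_i=s(\omega_1)-s(\omega_2)$, so the increment of $s$ controls the total of the $d_i$.

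For sufficiency, suppose $\comonf_i=F_i\circ s$ with each $F_i$ nondecreasing. Then $d_i=F_i(s(\omega_1))-F_i(s(\omega_2))$ has the same sign as $s(\omega_1)-s(\omega_2)$ or vanishes, so $d_id_j\ge0$ for every pair $i,j$, which is precisely \eqref{eq:comonotonic}. This direction is immediate.

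For necessity, I first build the factoring functions $F_i$ and then verify their monotonicity. \emph{Well-definedness:} if $s(\omega_1)=s(\omega_2)$ then $\sum_i d_i=0$, and since the $d_i$ all share a common sign they must all vanish, i.e.\ $\comonf_i(\omega_1)=\comonf_i(\omega_2)$ for every $i$. Hence each $\comonf_i(\omega)$ depends on $\omega$ only through $s(\omega)$, and putting $F_i(t)\coloneqq\comonf_i(\omega)$ for any $\omega$ with $s(\omega)=t$ gives well-defined maps $F_i\colon s(\Omega)\to\reals$ satisfying $\comonf_i=F_i\circ s$. \emph{Monotonicity:} if instead $s(\omega_1)<s(\omega_2)$, apply the same reasoning to $d_i'\coloneqq\comonf_i(\omega_2)-\comonf_i(\omega_1)$; their common sign together with $\sum_i d_i'=s(\omega_2)-s(\omega_1)>0$ forces every $d_i'\ge0$, so $F_i$ is nondecreasing on $s(\Omega)$.

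It then remains to extend each $F_i$ from $s(\Omega)\subseteq\reals$ to a nondecreasing function on all of $\reals$, which is routine: a monotone function on any subset of the line admits a monotone extension, e.g.\ $\tilde F_i(t)\coloneqq\sup\{F_i(\tau):\tau\in s(\Omega),\ \tau\le t\}$ on the relevant range, extended by constants on the tails; only the values on $s(\Omega)$ affect the identity $\comonf_i=F_i\circ s$. The main (indeed the only) genuine obstacle is the necessity direction, and it rests entirely on the sign dichotomy for the $d_i$ combined with the linear constraint $\sum_i d_i=s(\omega_1)-s(\omega_2)$; once that is isolated, well-definedness and monotonicity both drop out at once, and the extension is bookkeeping.
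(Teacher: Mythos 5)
Your sufficiency argument and the first two steps of the necessity direction are correct, and they organize the proof somewhat differently from the paper: you first show that $\comonf_i$ factors through $s$ (when $s(\omega_1)=s(\omega_2)$ the increments $d_i$ share a common sign and sum to zero, hence all vanish), define $F_i$ only on the range $s(\Omega)$, and get monotonicity there by the same sign-plus-sum argument. The paper instead defines $F_i$ on all of $\reals$ at once, by $F_i(z)=\sup\{\comonf_i(\omega):s(\omega)\le z\}$ (with an $\inf$ convention when the constraint set is empty), and uses comonotonicity to show this supremum equals $\comonf_i(\omega)$ whenever $z=s(\omega)$. Up to this point the two routes are close in spirit, and yours is arguably the more transparent.

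The gap is in your last step. The general principle you invoke, that \emph{a monotone real-valued function on any subset of the line admits a monotone real-valued extension to all of $\reals$}, is false: take $F(\tau)=-1/\tau$ on $(0,1)$, which is nondecreasing, but any nondecreasing extension $\tilde F:\reals\to\reals$ would have to satisfy $\tilde F(0)\le\inf_{\tau\in(0,1)}F(\tau)=-\infty$. So the extension is not bookkeeping; it is precisely where finiteness must be proved, and it is the issue to which the paper devotes the entire second half of its proof (showing $F_i(z)<+\infty$ and $F_i(z)>-\infty$ for every $z$). In your setting the extension does go through, but only because of comonotonicity, via the identity $\sum_j F_j(\tau)=\tau$ on $s(\Omega)$: fixing $\tau_0\in s(\Omega)$ with $\tau_0\le t$, every $\tau\in s(\Omega)$ with $\tau_0\le\tau\le t$ satisfies $F_i(\tau)=\tau-\sum_{j\ne i}F_j(\tau)\le t-\sum_{j\ne i}F_j(\tau_0)$, while $\tau<\tau_0$ gives $F_i(\tau)\le F_i(\tau_0)$; this makes your supremum finite. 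A symmetric argument shows $F_i$ is bounded below on $s(\Omega)$ whenever $s(\Omega)$ is bounded below, which is exactly the case needed for your constant left tail. Adding these few lines closes the gap; as written, the proof rests on a false general claim at the one point where comonotonicity still has work to do.
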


\begin{proof}
Since the \emph{``if'' } implication holds trivially, it suffices to prove the \emph{``only if''}. 
Suppose that the $\comonf_i$'s are comonotonic.
For $z\in\reals$ define
$F_i(z)=\sup_{\omega\in\Omega}\{\comonf_i(\omega):s(\omega)\leq z\}$ if there is some $\omega\in\Omega$ with $s(\omega)\leq z$,  and $F_i(z)=\inf_{\omega\in\Omega}\comonf_i(\omega)$ otherwise. 
Clearly the functions $F_i$ are  nondecreasing and $\comonf_i(\omega)\leq F_i(s(\omega))$, whereas comonotonicity implies that the latter holds with equality for all $i=1,\ldots,m$ and $\omega\in\Omega$.

It remains to show that the $F_i$'s are everywhere finite. Indeed, if $F_i(z)=\infty$ for some $z\in\reals$
we can find a sequence $\omega_n\in\Omega$  with $s(\omega_n)\leq z$ such that $\comonf_i(\omega_n)$ increases to $\infty$.
However, by comonotonicity, the latter implies $s(\omega_n)\to\infty$ which is a contradiction. 
Now, if $F_i(z)=-\infty$ we must be in the case $F_i(z)=\inf_{\omega\in\Omega}\comonf_i(\omega)=-\infty$ and comonotonicity implies  $\inf_{\omega\in\Omega} s(\omega)=-\infty$, so
we may find $\omega\in\Omega$ with $s(\omega)\leq z$ which yields the contradiction $F_i(z)\geq \comonf_i(\omega)>-\infty$.    
\end{proof}

\subsection{A remark on the number of active regimes}
\label{App:number_of_active_regimes}
The monotonicity result in \citet[proposition 3.12]{ComDosSca:MP2021} implies that the number of active regimes in a  single-commodity routing game on a \ac{SP} network is at most the number of paths. 
This bound does not hold for multiple commodities. 
In a singleton congestion game there are $\prod_{\commodity\in\commodities} \parens*{2^{\abs*{\resources^{\commodity}}}-1}$ potential combinations for $\activ{\resourcesprof}(\rateprof)$, and this bound may be attained (see \cref{ex:all-regimes-are-attained} below). 
This is not the case for single-commodity routing games: if we consider a subnetwork composed by only two paths, it is always \ac{SP} and only two of the three nonempty subsets of paths can actually correspond to an active regime $\activ{\resourcesprof}(\rate)$ for some $\rate\in[0,+\infty)$.

\begin{example}
\label{ex:all-regimes-are-attained}
Let us build a multi-commodity routing game that attains the maximal bound for the number of  active regimes.
Take $m$ a positive integer and consider a routing game on a parallel network with $m$ resources (edges)
$\resources=\{1,\ldots,m\}$ with cost functions
\begin{equation*}
\forall\irun\in \braces*{1,\dots,m},\quad\cost_\irun(\load_\irun)=\load_\irun+\irun,
\end{equation*}
and $m+1$ commodities where each commodity $\irun=1,\ldots,m$ can only use one specific resource $\resources^{\irun}=\{\irun\}$, whereas commodity $(m+1)$  can use all the resources $\resources^{m+1}=\resources$. 

We claim that $\activ{\resourcesprof}(\rateprof)$ assumes the maximum number 
$\prod_{\irun=1}^{m+1} \parens*{2^{\abs*{\resources^{\irun}}}-1} = 2^m-1$ of possible active regimes as the demands $\rateprof$ vary.
Indeed, for each commodity $\irun\leq m$ the active regime is always $\{\irun\}$, whereas every nonempty subset $\regime^{m+1}\subset\resources$
is the active regime of the
$(m+1)$-th commodity for some demand $\rateprof$. 
Namely, let $\irun_{\max}=\max\{\irun\in\regime^{m+1}\}$
and consider the demand 
\begin{align*}
\forall \irun\leq m, \quad\rate^{\irun}
&=
\begin{cases}
0 &\text{if }\irun\in\regime^{m+1},\\
\irun_{\max} &\text{if }\irun\notin\regime^{m+1},\\
\end{cases}\\
\rate^{m+1}
&=\sum_{\irun\in\regime^{m+1}}(\irun_{\max}-\irun).
\end{align*}
Then, the unique equilibrium is such that commodity $(m+1)$  allocates $\irun_{\max}-\irun$ to each resource $\irun\in\regime^{m+1}$ with cost $\irun_{\max}$, whereas every resource $\irun\notin\regime^{m+1}$ has a cost 
$\irun_{\max}+\irun >\irun_{\max}$, so that the active regime for commodity $(m+1)$ is exactly $\regime^{m+1}$.
\end{example}

%
% Section --------------------------------------
%

\section{List of symbols}
\label{sc:list-of-symbols}

\begin{longtable}{p{.14\textwidth} p{.82\textwidth}}

$\cost_{\resource}$ & cost function of resource $\resource$\\
$\cost_{\strategy}$ & cost function of strategy $\strategy$, defined in \eqref{eq:strategy_cost}\\

$\costprof$ & vector of cost functions; it can be indexed both by elements in $\edges$ or $\routes$\\

$\cost^{\varepsilon}_\resource(\load_\resource)$ & regularized cost $ \cost_\resource(\load_\resource)+2\varepsilon \load_\resource$\\

$\Cost_\resource(\load_\resource)$ &  primitive of costs $ \int_0^{\load_\resource}\cost_\resource(z)\diff z$\\
$\Cost_{\resource}^{*}(\argdot)$ & Fenchel conjugate of $\Cost_\resource(\argdot)$\\
$\costclass$ & subset of commodities  having the same equilibrium cost\\

$\sink^{\commodity}$ & destination for \ac{OD} pair $\commodity$\\

$(\dual_{\rateprof})$ & dual problem, defined in \eqref{eq:dual}\\
$\edges$ &  set of edges\\

$\canb^{\commodity}$ &  $\commodity$-th vector of the canonical basis of $\mathbb{R}^{\commodities}$\\

$\flow_{\strategy}^{\commodity}$ & flow on strategy $\strategy$ in commodity $\commodity$\\

$\flowprof^{\commodity}$ & $\commodity$-th commodity flow vector $ \parens*{\flow_{\strategy}^{\commodity}}_{\strategy\in\strategies^{\commodity}}$\\
$\flowprof$ &flow vector $\parens*{\flowprof^{\commodity}}_{\commodity\in\commodities}$ \\
$\flows_{\rateprof}$ & set of feasible pairs $(\loadprof,\flowprof)$ for the demand vector $\rateprof$\\
$\graph$ &  directed multigraph\\
$\game$ & $\parens*{\resources,\costprof,\strategiesprof}$, congestion game structure\\

$\game^{\varepsilon}$ & $(\resources,\costprof^{\varepsilon},\strategiesprof)$,  perturbed congestion game structure\\
$\game_\costclass$ & $(\resources_\costclass,\costprof,\strategies_{\costclass} )$, single commodity game defined in \cref{def:region-order}\\
$\game_{1} \times \game_{2}$ & series game\\
$\game_{1} \cup \game_{2}$ & parallel game\\
$\commodity$ & commodity\\
$\commodities$ & set of commodities\\
$\source^{\commodity}$ & origin for \ac{OD} pair $\commodity$\\
$(\prim_{\rateprof})$ & primal problem, defined in \eqref{eq:primal}\\
$\activ{\routes}(\rate)$ & set of paths that attain equilibrium cost at equilibrium with demand $\rate$ \\
$\routes^{\commodity}$ &  the set of paths of commodity $\commodity$\\

$\routesprof$ & $(\routes^{\commodity})_{\commodity\in\commodities}$\\

$\resource$ & resource\\
$\resources$ & set of resources\\
$\activ{\resources}^{\commodity}(\rateprof)$ & set of active resources for commodity $\commodity\in\commodities$\\
$\activ{\resourcesprof}(\rateprof)$ & $(\activ{\resources}^{\commodity}(\rateprof))_{\commodity\in\commodities}$, active regime\\

$\resources_0$ &  set of resources such that $\cost_{\resource}(\load_{\resource}(\rateprof_0))=\eqcost^{\commodity}(\rateprof_0)$\\

$\resources_0^{+}$ &
$\{\resource\in \resources_0\colon \load_\resource(\rateprof_0+\var \canb^{\commodity})>\load_\resource(\rateprof_0)\}$, defined in \eqref{eq:R0+}\\
$\resources_0^{-}$ & $\{\resource\in \resources_0\colon \load_\resource(\rateprof_0+\var \canb^{\commodity})<\load_\resource(\rateprof_0)\}$, defined in \eqref{eq:R0-}\\
$\resources_0^{=} $ & 
$\{\resource\in \resources_0\colon \load_\resource(\rateprof_0+\var \canb^{\commodity})=\load_\resource(\rateprof_0)\}$, defined in \eqref{eq:R0=}\\
$\resources_{\costclass}$ & $\parens*{\cup_{\commodity\in\costclass}\resources^{\commodity}} \setminus \parens*{\cup_{\commodityalt\succ\costclass}\resources^{\commodityalt}}$, defined in \eqref{eq:R-C}\\

$\strategies^{\commodity}$ & set of feasible strategies for commodity $\commodity$\\
$\strategiesprof$ & $ \times_{\commodity\in\commodities}{\strategies^{\commodity}}$, set of strategy profiles\\
$\solset(\dual_{\rateprof})$ &  optimal solution 
set of the dual problem\\
${\SC}(\rateprof)$ & $\sum_{\commodity\in \commodities} \rate^{\commodity}\eqcost^{\commodity}(\rateprof)$,  social cost\\

$\vinf_{\rateprof}(\zvarprof)$ & $\inf_{\flowprof}\perturb_{\rateprof}(\flowprof,\zvarprof)$, defined in \eqref{eq:primal}\\
$\valueW(\rateprof)$ & $\min_{\loadprof\in\loads_{\rateprof}}\sum_{\resource\in\resources}\Cost_\resource(\load_\resource)$, defined in \eqref{eq:min-Beckmann}\\
$\vertices$ & set of vertices\\

$\load_{\resource}$ & load of resource $\resource$, defined in \eqref{eq:edge-loads}\\
$\loadprof$ & $ (\load_{\resource})_{\resource\in\resources}$, load vector\\ 
$\loads_{\rateprof}$  &
projection of the set of feasible pairs $\flows_{\rateprof}$ onto the $\loadprof$ variables\\

$\region^\precsim$ & demand regions induced by a given order $\precsim$, defined in \eqref{eq:Gamma-prec}\\
$\region^{\precsim}_{\regimeprof}
$ & demand subregions for a given order $\precsim$ and active regime $\regime$, defined in \eqref{eq:Gamma-rho-prec} \\

$\canbres_{\resource}^{\strategy}$ & $\mathds{1}_{\braces{\resource\in\strategy}}$, defined in \eqref{eq:eta-r-s}\\
$\canbresprof^{\strategy}$ & $(\canbres^{\strategy}_{\resource})_{\resource\in\resources}$\\

$\eqcost^{\commodity}$ & equilibrium cost of commodity $\commodity$, defined in \eqref{eq:Wardrop}\\
$\eqcostprof$ & $\parens*{\eqcost^{\commodity}}_{\commodity\in\commodities}$, equilibrium cost vector\\

$\rate^{\commodity}$ & demand for commodity $\commodity$\\
$\rateprof$ & $\parens*{\rate^{\commodity}}_{\commodity\in\commodities}$ demand vector\\

$\rate_{\costclass}$ & aggregate demand on $\costclass$\\

$\regimeprof$ & $\parens*{\regime^{\commodity}}_{\commodity\in\commodities}$, regime\\
$\regime^{\commodity}$ & subset of $\routes^{\commodity}$\\
$\eqcostedge_{\resource}$& equilibrium cost of resource $\resource$\\

$\perturb_{\rateprof}$ & defined in \eqref{eq:perturb}\\
$\perturb_{\rateprof}^{*}$ & Fenchel conjugate of $\perturb_{\rateprof}$, defined in \eqref{eq:phi-star}\\

\end{longtable}

\bibliographystyle{apalike}
\bibliography{biblio-games}

\end{document}